\date{}
\begin{document}

\title{Multiuser Diversity Gain in Cognitive Networks}


\author{Ali~Tajer~\and~Xiaodong~Wang
\thanks{The authors are with the Department of Electrical Engineering, Columbia University, New York, NY 10027 (email:\{tajer,  wangx\}@ee.columbia.edu).}}

\maketitle

\allowdisplaybreaks
\begin{abstract}
Dynamic allocation of resources to the \emph{best} link in large
multiuser networks offers considerable improvement in spectral
efficiency. This gain, often referred to as \emph{multiuser diversity gain}, can be cast as double-logarithmic growth of the network
throughput with the number of users. In this paper we consider large
cognitive networks granted concurrent spectrum access with  license-holding users. The primary network affords to share its under-utilized spectrum bands with the secondary users. We assess the optimal multiuser diversity gain in the cognitive networks by quantifying how the sum-rate throughput of the network scales with the number of secondary users. For this purpose we look at the optimal pairing of spectrum bands and secondary users, which is supervised by a central entity fully aware of the instantaneous channel conditions, and show that the throughput of the cognitive network scales double-logarithmically with the number of secondary users ($N$) and linearly with the number of available spectrum bands ($M$), i.e., $M\log\log N$. We then propose a \emph{distributed} spectrum allocation scheme, which does not necessitate a central controller or any information exchange between different secondary users and still obeys the optimal throughput scaling law. This scheme requires that \emph{some} secondary transmitter-receiver pairs exchange $\log M$ information bits among themselves. We also show that the aggregate amount of information exchange between secondary transmitter-receiver pairs is {\em asymptotically} equal to $M\log M$. Finally, we show that our distributed scheme guarantees fairness among the secondary users, meaning that they are equally likely to get access to an available spectrum band.
\end{abstract}

{\bf kewords:}
Cognitive radio, distributed, fairness, multiuser diversity, spectrum allocation.

\section{Introduction}
\label{sec:intro}

Dense multiuser networks offer significant spectral efficiency
improvement by dynamically identifying and allocating the communication resources to  the \emph{best} link. The improvements thus attained are often referred to as \emph{multiuser diversity gain} and rest on the basis of opportunistically allocating all the resources to the most reliable link. The performance of such resource allocation scheme relies on the peak, rather than average, channel conditions and improves as the number of users increases, as it becomes more likely to have a user with an instantaneously strong link.

The notion of opportunistic communication and multiuser diversity was first introduced \cite{Knopp:ICC95} for uplink transmissions, and further developed in~\cite{Tse:ISIT97,Viswanath:IT02,Sanayei:IT07} for
downlink transmissions. The analysis of multiuser diversity gain in
downlink multiple-input multiple-output (MIMO) channels is provided in
\cite{Sharif:IT05, Sharif:COM07}. In all these transmission schemes,
the sum-rate capacity exhibits a double-logarithmic growth with the
number of users.

The recent advances in secondary spectrum leasing \cite{FCC} and
cognitive networks \cite{Mitola:thesis} suggest accommodating unlicensed users (secondary users or cognitive radios), within the license-holding networks and allowing them to access under-utilized spectrum bands. Among different spectrum sharing schemes, {\em underlaid} spectrum  access~\cite{Goldsmith:IEEE09} has received significant attention. This scheme allows for simultaneous spectrum access by the primary and secondary users, provided that the power of secondary users is controlled such that they impose limited interference to the primary users.

In this paper we consider opportunistic underlaid spectrum access by secondary users and assess the multiuser diversity gain by
analyzing the sum-rate throughput scaling of the cognitive network. Such
analysis for cognitive networks differs from those of primary networks
studied in \cite{Knopp:ICC95, Tse:ISIT97, Viswanath:IT02, Sharif:IT05,
Sharif:COM07} in two directions. First, the transmissions in the
cognitive network are contaminated by the interferences induced by the
primary users. The existence of such interference does not make
opportunistic communication possible by merely finding the strongest
secondary link, and necessitates accounting for the effect of
interference as well. Secondly, and more importantly, the  uplink/downlink transmissions in the networks referenced above, require feedback from the users to the base station and it is the base station that dynamically decides which user(s) should receive the resources. Cognitive networks, in contrary, are often assumed to lack any infrastructure or central entity and spectrum allocation should be carried out in a distributed way.

To address these two issues, we first focus on only examining the for the effects of interference and assume that the cognitive network has a central decision-making entity, fully aware of all cognitive users' instantaneous channel realizations. This result, providing the optimal scaling factor, presents an upper bound on the throughput yielded by any distributed spectrum allocation scheme. In the next step, we offer a \emph{distributed} algorithm where the secondary users decide about accessing a channel merely based on their own perception of  instantaneous network conditions.

Our analyses reveal that, interestingly, in both centralized and
distributed setups, the sum-rate throughput scales double-logarithmically with the number of users, which is the optimal growth and is the same as that of centralized primary networks. Therefore, the interference from the primary network incurs no loss on the multiuser diversity gain of the cognitive network.

We also examine how fairness is maintained in our distributed scheme.
Generally, in opportunistic communication schemes there exists a
conflict between fairness and multiuser diversity gain, as the network
tends to reserve the resources for the most reliable links, which leads the network to be dominated by the users with strong links. We show
that, however, in our distributed scheme, we can ensure fairness among
the secondary users by providing them with the same opportunity for accessing an available spectrum band.

The remaining part of the paper is organized as follows. In Section
\ref{sec:descriptions} we provide the system model as well as the statement of the problem.   Sections~\ref{sec:centralized}~and~\ref{sec:distributed} discuss the
sum-rate throughput scaling laws in centralized and distributed
cognitive networks, respectively. Our distributed algorithm requires
some information exchange between each cognitive transmitter and its
designated receiver. The amount of such information is quantified in
Section~\ref{sec:information}. As we are considering opportunistic type of spectrum access, it is crucial to also look at the fairness among
the users. The discussion on the fairness is given in
Section~\ref{sec:fairnes}. Some remarks on the implementation of the distributed spectrum access algorithm are provided in Section \ref{sec:discussions} and Section~\ref{sec:conclusion} concludes
the paper. In order to enhance the flow of the material, we have
confined most of the proofs in the appendices.

\section{System Descriptions}
\label{sec:descriptions}
\subsection{System Model}
\label{sec:model}

We consider a \emph{decentralized} cognitive network comprising of $N$
secondary transmitter-receiver pairs coexisting with the primary
transmitters via \emph{underlaid}~\cite{Goldsmith:IEEE09}
spectrum access. Therefore, the primary and secondary users can coexist simultaneously on the same spectrum band. The primary network
affords to accommodate $1\leq M\ll N$ secondary users and allows them
to access the non-overlapping spectrum bands $B_1,\dots, B_{M}$ such
that each band is allocated to \emph{exactly one} secondary
transmitter-receiver pair. We assume that the secondary transmitters and receivers are paired up {\em a priori} such that each secondary  transmitter knows its designated receiver and vice versa. We also assume that each secondary transmitter and receiver is potentially capable of operating on each of the $M$ spectrum  bands, a feature facilitated by having appropriate reconfigurable hardware.

We assume quasi-static flat fading channels and denote the channel from the $j^{th}$ primary transmitter to the $i^{th}$ secondary receiver in the $m^{th}$ spectrum band ($B_m$) by $h^m_{i,j}\in\mathbb{C}$ and denote the channel between the $i^{th}$ secondary transmitter-receiver pair in the $m^{th}$ spectrum band ($B_m$) by $g^m_i\in\mathbb{C}$. Let
$x^p_i(t)$ and $x^s_i(t)$ represent the transmitted signals by the
$i^{th}$ primary transmitter and the $i^{th}$ secondary transmitter,
respectively. We assume that there might be a group of active primary
users on each spectrum band $B_m$ and define the set $\mathcal{B}_m$
such that it contains the indices of such users. If the $n^{th}$ secondary pair transmits on $B_m$, then the received signal at the $n^{th}$ secondary receiver is given by
\begin{equation}\label{eq:model1}
    y^m_n=\sqrt{\eta_n}g^m_nx^s_n+\sum_{j\in\mathcal{B}_m}\sqrt{\gamma_{n,j}}h^m_{n,j}x^p_j+z^m_n,
\end{equation}
where $z^m_n\in\mathcal{CN}(0,N_0)$ is the additive white Gaussian
noise at the $n^{th}$ receiver. In a non-homogeneous network, the users experience different path-loss and shadowing effects, which we account
for by incorporating the terms $\{\gamma_{i,j}\}$ and $\{\eta_i\}$.
Also, we assume that the primary and secondary transmitters satisfy
average power constraints $P_p$ and $P_s$, respectively, i.e.,
$\bbe[|x^p_i|^2]\leq P_p$ and $\bbe[|x^s_i|^2]\leq P_s$ and the channel
coefficients $\{h^m_{i,j}\}_{i,j,m}$ and $\{g^m_i\}_{i,m}$ are
i.i.d. and distributed as complex Gaussian $\mathcal{CN}(0,1)$. Each
secondary receiver treats all undesired signals (interference from the
primary users) as Gaussian interferers. Therefore, the
signal-to-interference-plus-noise-ratio ($\sinr$) of the $n^{th}$
secondary pair on the spectrum band $B_m$ is given by
\begin{eqnarray}\label{eq:sinr}
    \nonumber \sinr_{m,n}&=&
    \frac{\eta_n\bbe[|g^m_nx^s_n|^2]}{N_0+\sum_{j\in\mathcal{B}_m}\gamma_{n,j}\bbe[|x^p_jh^m_{n,j}|^2]}\\
    &=&
    \frac{P_s\eta_n|g^m_n|^2}{N_0+P_p\sum_{j\in\mathcal{B}_m}\gamma_{n,j}|h^m_{n,j}|^2}.
\end{eqnarray}
We define the transmission signal-to-noise ratio ($\snr$) by
$\rho\dff\frac{P_s}{N_0}$. Throughout the paper we say that $a_N$ and
$b_N$ are \emph{asymptotically} equal, denoted by $a_N\doteq b_N$ if
$\lim_{N\rightarrow\infty}\frac{a_N}{b_N}=1$, and define $\dotlt$ and
$\dotgt$, accordingly. We also define the set of secondary users
indices by $\mathcal{N}=\{1,\dots,N\}$. All the rates in the paper are in bits/sec/Hz and $\log$ refers to the logarithm in base 2.

\subsection{Problem Statement}
\label{sec:definition}

Our goal is to assess the multiuser diversity gain of cognitive networks. For this purpose we identify $M$ secondary transmitter-receiver pairs out of $N$ available ones and assign one spectrum band $B_m$ to each of them, such that the cognitive network throughput is maximized. We assume that all spectrum bands $B_m$ are of the same bandwidth. Therefore, the maximum throughput is given by
\begin{equation}\label{eq:Rmax}
    R_{\max}=\bbe\left[\max_{A\subset\mathcal{N},\;|A|=M}
    \sum_{m=1}^{M}\log\left(1+\sinr_{m,{A_m}}\right)\right],
\end{equation}
where $A_m$ denotes the $m^{th}$ element of set $A$, for $m=1,\dots,M$
and the maximization is taken over all \emph{ordered} subsets of
$\mathcal{N}$.

In order to find the optimal multiuser diversity gain in the cognitive network we first consider a centralized setup. We assume that there exists a
central decision-making entity in the cognitive network, which is fully and instantaneously aware of the channel conditions of all secondary
users. The central node solves the problem cast in (\ref{eq:Rmax}) by
an exhaustive search for pairing up $M$ secondary users
with the $M$ available channels. For such secondary user-channel pairs
we analyze how the sum-rate of the cognitive network scales as the
number of cognitive users ($N$) increases. Such centralized setup
imposes extensive information exchange\footnote{$M$ real numbers per
user.} which can be prohibitive in large network sizes.

Next, motivated by alleviating the amount of information exchange imposed by the centralized setup and noting that our cognitive network is \emph{ad-hoc} in nature and lacks a central-decision making entity, we propose a decentralized spectrum allocation scheme. In the distributed scheme each secondary user decides about taking over a channel solely based on its own perception of the network realization. We prove that the proposed distributed scheme retains the same throughput scaling law as in the centralized setup, i.e., is asymptotically optimal.

\section{Centralized Spectrum Allocation}
\label{sec:centralized}

The central decision-making unit has access to all $\{\sinr_{m,n}\}$
and performs an exhaustive search over all possible
user-channel (spectrum band) combinations in order to find the one that maximizes the sum-rate throughput given in (\ref{eq:Rmax}). In order to find the throughput scaling, we establish lower and upper bounds on $R_{\max}$, denoted by $R^l_{\max}$ and $R^u_{\max}$, respectively, and show that these bounds are asymptotically equal, i.e.,  $R^l_{\max}\doteq R^u_{\max}$, which in turn provide the optimal throughput scaling law of the cognitive network.

We define the most favorable user of the $m^{th}$ spectrum band as the
user with the largest $\sinr$ on this band, i.e.,
\begin{equation}\label{eq:favorable}
    n^*_m\dff\arg\max_{1\leq n\leq N}\sinr_{m,n}.
\end{equation}
In general, it might so happen that one user is the most favorable user for two different spectrum bands, i.e., $n^*_m=n^*_{m'}$, while $m\neq
m'$, and as a result, these two spectrum bands cannot be allocated to
their most favorable users simultaneously (we have assumed that each
user may get access to only one spectrum band). Let us define
$\mathcal{D}$ as the event that different spectrum bands have distinct
most favorable users i.e., no single user is the most favorable user
for two distinct spectrum bands. Note that pairing the secondary users and the spectrum bands conditioned on the event $\mathcal{D}$ is equivalent to allocating each channel to its most favorable user, i.e.,
\begin{align}
    \nonumber &\bbe\left[\max_{A\subset\mathcal{N},\;|A|=M}\sum_{m=1}^{M}\log\left(1+\sinr_{m,{A_m}}\right)\bigg
    |\;\mathcal{D}\right]\\
    \label{eq:D1}&\hspace{1.2 in}=\bbe\left[\sum_{m=1}^{M}\log
    \left(1+\sinr_{m,n^*_m}\right)\right].
\end{align}
On the other hand, under event $\mathcal{\bar D}$, at least one
spectrum band will not be allocated its most favorable user and
therefore we have
\begin{align}
    \nonumber &\bbe\left[\max_{A\subset\mathcal{N},\;|A|=M}\sum_{m=1}^{M}\log\left(1+\sinr_{m,{A_m}}\right)\bigg
    |\;\mathcal{\bar D}\right]\\
    \label{eq:D2}&\hspace{1.2 in}\leq \bbe\left[\sum_{m=1}^{M}\log
    \left(1+\sinr_{m,n^*_m}\right)\right].
\end{align}
Equations (\ref{eq:D1}) and (\ref{eq:D2}) give rise to
\begin{align}\label{eq:RUmax}
    \nonumber
    R_{\max}&=\bbe\left[\max_{A\subset\mathcal{N},\;|A|=M}\sum_{m=1}^{M}\log
    \left(1+\sinr_{m,{A_m}}\right) \;\bigg
    |\;\mathcal{D}\right]P(\mathcal{D}) \\
    \nonumber &+\bbe\left[\max_{A\subset\mathcal{N},\;|A|=M}\sum_{m=1}^{M}\log
    \left(1+\sinr_{m,{A_m}}\right)\;\bigg
    |\;\mathcal{\bar D}\right]P(\mathcal{\bar D})\\
    &\leq \bbe\left[\sum_{m=1}^{M}\log
    \left(1+\sinr_{m,n^*_m}\right)\right]\dff R^u_{\max}.
\end{align}
Also it can be readily shown that
\begin{align}
    \nonumber R_{\max}&\geq\bbe\left[\max_{A\subset\mathcal{N},\;|A|=M}\sum_{m=1}^{M}\log
    \left(1+\sinr_{m,{A_m}}\right)\;\bigg|\;\mathcal{D}\right]P(\mathcal{D})\\
    \label{eq:RLmax}&=R^u_{\max}P(\mathcal{D})\dff R^l_{\max}.
\end{align}
\begin{lemma}
\label{lemma:D}

$R^l_{\max}$ and $R^u_{\max}$ are asymptotically equal, i.e.,
$R^l_{\max}\doteq R^u_{\max}$.
\end{lemma}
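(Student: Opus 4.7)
The plan is to exploit the identity $R^l_{\max} = R^u_{\max}\, P(\mathcal{D})$ from~(\ref{eq:RLmax}), which reduces the claim $R^l_{\max}\doteq R^u_{\max}$ to showing that $P(\mathcal{D})\to 1$ as $N\to\infty$. Since $R^u_{\max}$ is strictly positive for every finite $N$, it suffices to prove that the probability of a ``collision''---two distinct spectrum bands sharing the same most favorable user---vanishes asymptotically.

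First I would apply the union bound over the $\binom{M}{2}$ unordered pairs of spectrum bands,
\begin{equation*}
    P(\mathcal{\bar D}) \;\leq\; \sum_{1\leq m<m'\leq M} P\bigl(n^*_m = n^*_{m'}\bigr).
\end{equation*}
For any fixed pair $m\neq m'$, I would exploit the fact, visible in~(\ref{eq:sinr}), that $\sinr_{m,n}$ depends only on the band-$m$ fading coefficients $\{g^m_i\}_i$ and $\{h^m_{i,j}\}_{i,j}$, which are independent of their band-$m'$ counterparts by the i.i.d.\ assumption in Section~\ref{sec:model}. Consequently $n^*_m$ and $n^*_{m'}$ are independent, and setting $p_{m,n}\dff P(n^*_m=n)$ one obtains
\begin{equation*}
    P(n^*_m=n^*_{m'}) \;=\; \sum_{n=1}^{N} p_{m,n}\,p_{m',n} \;\leq\; \max_{n} p_{m',n}.
\end{equation*}

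The remaining step is to show that $\max_n p_{m,n}\to 0$ as $N\to\infty$. Because the fading coefficients are i.i.d.\ $\mathcal{CN}(0,1)$ across both $n$ and $m$, the random variables $\{\sinr_{m,n}\}_{n=1}^N$ are continuous, and in the user-symmetric path-loss case they are identically distributed; exchangeability then yields $p_{m,n}=1/N$ and hence $P(\mathcal{\bar D})\leq\binom{M}{2}/N\to 0$. In the heterogeneous case one needs only the mild condition that no single user's SINR distribution dominates all others so sharply that it captures the maximum with non-vanishing probability as $N$ grows---a condition automatic once the path-loss parameters $\{\eta_n,\gamma_{n,j}\}$ are bounded, since then the maximum SINR among any cofinite subset of users diverges.

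The hardest part is making this last step airtight in the non-symmetric setting: one has to carefully separate the randomness of user $n$'s own channel from that of the $N-1$ competing users when controlling $p_{m,n}$. Once $P(\mathcal{D})\to 1$ is in hand, however, the conclusion is immediate, since $R^l_{\max}/R^u_{\max}=P(\mathcal{D})\to 1$, which is precisely the statement $R^l_{\max}\doteq R^u_{\max}$.
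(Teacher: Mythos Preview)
Your reduction to $P(\mathcal{D})\to 1$ is exactly how the paper proceeds, but from there the two arguments diverge. You exploit the cross-\emph{band} independence of $n^*_m$ and $n^*_{m'}$ to write $P(n^*_m=n^*_{m'})=\sum_n p_{m,n}p_{m',n}\leq \max_n p_{m',n}$, which is clean in the user-symmetric case ($=1/N$) but, as you concede, leaves the heterogeneous case only heuristically handled. The paper instead introduces the pairwise comparison bound $q_{\max}\dff\max_{m,n,n'}P(\sinr_{m,n}\geq\sinr_{m,n'})$ and argues that $P(n^*_m=n)\leq (q_{\max})^{N-1}$, obtaining the exponential estimate $P(C)\leq N(q_{\max})^{|C|(N-1)}\to 0$; since $q_{\max}<1$ is a constant determined by the path-loss parameters alone, this disposes of the heterogeneous case directly. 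The trade-off is that the paper's factorization $P(\sinr_{m,n}\geq\sinr_{m,n'},\ \forall n'\neq n)=\prod_{n'\neq n}P(\sinr_{m,n}\geq\sinr_{m,n'})$ is not literally correct---the events all share $\sinr_{m,n}$ and are therefore not independent across $n'$---so your cross-band argument is in fact the more rigorous of the two, while the paper's $q_{\max}$ device points to one way of completing your heterogeneous step (e.g., by conditioning on $\sinr_{m,n}$ and bounding the tail of the maximum of the remaining $N-1$ SINRs).
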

\begin{proof}
See Appendix \ref{app:lemma:D}.
\end{proof}

Now, we find how $R_{\max}^u$ scales as $N$ increases. Note that the
$\sinr$s are statistically independent for all users and spectrum
bands. The reason is that $\sinr_{m,n}$ given in (\ref{eq:sinr})  inherits its randomness from the randomness of $g^m_n$ (fading coefficient of the channel between the $n^{th}$ secondary pair on the $m^{th}$ band) and $\{h^m_{n,j}\}_j$ (the fading coefficient of the channels from the $j^{th}$ primary user to the $n^{th}$ secondary receiver on the $m^{th}$ band). Since for any two different pairs of $(m,n)\neq(m',n')$, the fading coefficients $g^m_n$ and $g^{m'}_{n'}$ refer to fading in different locations or in different spectrum bands, therefore they are statistically independent. Similarly it can be argued that $h^m_{n,j}$ and $h^{m'}_{n',j}$ are also statistically independent for $(m,n)\neq(m',n')$. As a result, all the random ingredients of $\sinr_{m,n}$ and $\sinr_{m',n'}$ for $(m,n)\neq(m',n')$ are independent which in turn justifies the independence of the $\sinr$s. Nevertheless, $\sinr$s are not identically distributed since different users experience different path-losses and shadowing effects. Hence, for more mathematical tractability we build two other sets whose elements provide lower and upper bounds on $\sinr_{m,n}$ and are i.i.d. For this purpose we define
\begin{align*}
  &\gamma_{\max}\dff\max_{i,j}\left\{\frac{\gamma_{i,j}}{\eta_i}\right\}, \;\;\; &&\eta_{\max}=\max_i\eta_i,\\
  \mbox{and}\quad\quad & \gamma_{\min}\dff\min_{i,j}\left\{\frac{\gamma_{i,j}}{\eta_i}\right\},\;\;\;&&\eta_{\min}=\min_i\eta_i.
\end{align*}
For $m=1,\dots, M$ we also define the sets
$\mathcal{S}_l(m)=\{S_l(m,n)\}_{n=1}^N$ and
$\mathcal{S}_u(m)=\{S_u(m,n)\}_{n=1}^N$ such that for $n=1, \dots, N$
\begin{align}
\label{eq:Sl}S_l(m,n)&\dff\frac{|g^m_n|^2}{\frac{1}{\rho\eta_{\min}}+
\frac{P_p}{P_s}\gamma_{\max}\sum_{j\in\mathcal{B}_m}|h^m_{n,j}|^2},\\
\label{eq:Su}\mbox{and}\;\;\;S_u(m,n)&\dff\frac{|g^m_n|^2}{\frac{1}{\rho\eta_{\max}}+
\frac{P_p}{P_s}\gamma_{\min}\sum_{j\in\mathcal{B}_m}|h^m_{n,j}|^2}.
\end{align}
It can be readily verified that $S_l(m,n)\leq\sinr_{m,n}\leq
S_u(m,n)$. We use the notations $\mathcal{S}^{(i)}_l(m)$ and
$\mathcal{S}^{(i)}_u(m)$ to refer to the $i^{th}$ largest elements of the sets $\mathcal{S}_l(m)$ and  $\mathcal{S}_u(m)$, respectively, and use $\sinr^{(i)}_m$ to denote the $i^{th}$ largest element of
$\{\sinr_{m,n}\}_{n=1}^N$. In the following lemma we show how these
ordered elements are related.

\begin{lemma}
\label{lemma:order}

For any spectrum band $B_m$ and any $i=1,\dots, N$
we have $\mathcal{S}^{(i)}_l(m)\leq \sinr^{(i)}_m\leq
\mathcal{S}^{(i)}_u(m).$
\end{lemma}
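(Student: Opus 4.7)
The plan is to reduce the lemma to an elementary monotonicity property of order statistics: if two finite sequences $\{a_n\}_{n=1}^N$ and $\{b_n\}_{n=1}^N$ of real numbers satisfy $a_n \leq b_n$ for every $n$, then the $i$-th largest element of $\{a_n\}$ is no greater than the $i$-th largest element of $\{b_n\}$ for each $i=1,\dots,N$. The pointwise chain $S_l(m,n)\leq \sinr_{m,n}\leq S_u(m,n)$ has already been asserted immediately before the lemma (it follows by dividing numerator and denominator of \eqref{eq:sinr} by $P_s\eta_n$ and then substituting $\eta_{\min}\leq \eta_n\leq \eta_{\max}$ and $\gamma_{\min}\leq \gamma_{n,j}/\eta_n\leq \gamma_{\max}$ into the resulting denominator in the two appropriate directions). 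Once the order-statistics monotonicity is in hand, applying it twice -- once with $(a_n,b_n)=(S_l(m,n),\sinr_{m,n})$ and once with $(a_n,b_n)=(\sinr_{m,n},S_u(m,n))$ -- delivers both halves of the claim.

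For the order-statistics monotonicity itself, I would argue by contradiction. Suppose $a^{(i)}>b^{(i)}$ for some $i$. By definition of the $i$-th largest element, there exist at least $i$ distinct indices $n$ satisfying $a_n\geq a^{(i)}$. For each such $n$, the hypothesis $a_n\leq b_n$ forces
\begin{equation*}
b_n \;\geq\; a_n \;\geq\; a^{(i)} \;>\; b^{(i)},
\end{equation*}
so $b_n$ strictly exceeds $b^{(i)}$. This exhibits at least $i$ indices at which $b_n>b^{(i)}$, contradicting the defining property of the $i$-th largest element, which permits strictly fewer than $i$ entries strictly above it. Hence $a^{(i)}\leq b^{(i)}$, completing the auxiliary claim.

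I do not anticipate any real obstacle here: the lemma is essentially a packaging of the already established pointwise bounds together with the elementary fact that order statistics are monotone under pointwise domination. The only step requiring any care is the contradiction argument above, and even that is short. The usefulness of the lemma is downstream, as it reduces the asymptotic analysis of $\sinr^{(i)}_m$ to the analysis of order statistics drawn from the i.i.d.\ families $\mathcal{S}_l(m)$ and $\mathcal{S}_u(m)$, which are far more tractable than the non-identically distributed collection $\{\sinr_{m,n}\}_{n=1}^N$.
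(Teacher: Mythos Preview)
Your proposal is correct and follows essentially the same approach as the paper's proof: both use the pointwise domination $S_l(m,n)\leq\sinr_{m,n}\leq S_u(m,n)$ and then argue by contradiction via a counting argument on the number of entries that can exceed the $i$-th order statistic. Your version is marginally cleaner in that you isolate the general order-statistics monotonicity fact first and then apply it twice, whereas the paper works directly with the specific sequences, but the underlying idea is identical.
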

\begin{proof}
See Appendix \ref{app:lemma:order}.
\end{proof}

Now, by recalling the definition of $R_{\max}^u$ given in
(\ref{eq:RUmax}) and noting that $\sinr_{m,n^*_m}=\sinr^{(1)}_m$ and by
invoking the result of Lemma~\ref{lemma:order} we get
\begin{align}
    \label{eq:RUmax_bounds1} &R_{\max}^u\geq \bbe\left[\sum_{m=1}^{M}\log
    \left(1+\mathcal{S}^{(1)}_l(m)\right)\right], \\
    \label{eq:RUmax_bounds2}\mbox{and}\;\;\;&R_{\max}^u\leq \bbe\left[\sum_{m=1}^{M}\log \left(1+\mathcal{S}^{(1)}_u(m)\right)\right].
\end{align}
In order to further simplify the bounds on $R_{\max}^u$ given in
(\ref{eq:RUmax_bounds1})-(\ref{eq:RUmax_bounds2}), in the following lemma we provide the cumulative density functions
(CDF) of $S_l(m,n)$ and $S_u(m,n)$ (\ref{eq:Sl}) and (\ref{eq:Su}).

\begin{lemma}
\label{lemma:CDF}

The elements of $\mathcal{S}_l(m)$ and $\mathcal{S}_u(m)$ are i.i.d.
and their CDFs are
\begin{align}
\label{eq:CDFl}S_l(m,n) &\sim F_l(x;m)\dff
1-\frac{e^{-x/\rho\eta_{\min}}}{\left(\frac{P_p}{P_s}\gamma_{\max}x+1\right)^{K_m}},\\
\label{eq:CDFu}\mbox{and}\;\;  S_u(m,n) &\sim F_u(x;m)\dff
1-\frac{e^{-x/\rho\eta_{\max}}}{\left(\frac{P_p}{P_s}\gamma_{\min}x+1\right)^{K_m}},
\end{align}
where $K_m\dff |\mathcal{B}_m|$.
\end{lemma}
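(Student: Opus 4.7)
The plan is to reduce the CDF computation to standard facts about exponential and gamma distributions. First I would address the i.i.d.\ claim: for fixed $m$ and distinct indices $n \neq n'$, the random variables appearing in $S_l(m,n)$, namely $g^m_n$ and $\{h^m_{n,j}\}_{j\in\mathcal{B}_m}$, are drawn from a pool of fading coefficients disjoint from those defining $S_l(m,n')$. By the independence hypothesis on $\{g^m_i\}$ and $\{h^m_{i,j}\}$ stated in Section \ref{sec:model}, the random vectors $(g^m_n,\{h^m_{n,j}\}_j)$ and $(g^m_{n'},\{h^m_{n',j}\}_j)$ are independent, hence $S_l(m,n)$ and $S_l(m,n')$ are independent. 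Identical distribution follows because all deterministic parameters in the definition of $S_l(m,n)$ ($\eta_{\min}$, $\gamma_{\max}$, $\rho$, $P_p/P_s$, $K_m$) are free of the index $n$. The same argument applies to $S_u(m,n)$.

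Next I would compute the marginal CDF. Set $X \dff |g^m_n|^2$ and $Y \dff \sum_{j\in\mathcal{B}_m}|h^m_{n,j}|^2$. Since each fading coefficient is $\mathcal{CN}(0,1)$, $X$ is exponential with mean $1$, and $Y$, being a sum of $K_m$ independent unit-mean exponentials (independent also of $X$), follows a Gamma distribution with shape $K_m$ and unit rate, i.e.\ density $f_Y(y)=y^{K_m-1}e^{-y}/(K_m-1)!$ on $y \geq 0$. Writing $a\dff 1/(\rho\eta_{\min})$ and $b\dff (P_p/P_s)\gamma_{\max}$, I condition on $Y=y$ and use $P(X \leq t) = 1 - e^{-t}$:
\begin{equation*}
F_l(x;m) = \int_0^\infty \bigl(1 - e^{-x(a+by)}\bigr)\,\frac{y^{K_m-1}e^{-y}}{(K_m-1)!}\,dy
= 1 - e^{-xa}\int_0^\infty \frac{y^{K_m-1}e^{-(1+xb)y}}{(K_m-1)!}\,dy.
\end{equation*}

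The remaining integral is recognized as (a scaling of) the normalization of a Gamma density: substituting $u=(1+xb)y$ yields $(1+xb)^{-K_m}$. Therefore
\begin{equation*}
F_l(x;m) = 1 - \frac{e^{-x/(\rho\eta_{\min})}}{\bigl(1 + (P_p/P_s)\gamma_{\max} x\bigr)^{K_m}},
\end{equation*}
which is exactly (\ref{eq:CDFl}). The derivation of $F_u(x;m)$ is identical up to replacing $\eta_{\min}$ by $\eta_{\max}$ and $\gamma_{\max}$ by $\gamma_{\min}$, giving (\ref{eq:CDFu}). No step presents a real obstacle; the only mild care needed is in invoking the i.i.d.\ structure of the underlying fading coefficients to justify both the independence claim and the Gamma distribution of $Y$.
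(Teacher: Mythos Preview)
Your proof is correct and follows essentially the same route as the paper: identify $|g^m_n|^2$ as unit-mean exponential, the interference sum as $\mathrm{Gamma}(K_m,1)$, and condition on the latter to evaluate the distribution. The only cosmetic difference is that you compute the CDF directly via $P(X\le x(a+bY))$, whereas the paper first derives the PDF $f_S(x)$ by conditioning and then integrates; your route is slightly more economical and also makes the i.i.d.\ claim explicit, which the paper's appendix leaves implicit.
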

\begin{proof}
See Appendix \ref{app:lemma:CDF}.
\end{proof}

We denote the  $i^{th}$ order statistics of the statistical samples
$\mathcal{S}_l(m)$ and $\mathcal{S}_u(m)$ with parent distributions
given in (\ref{eq:CDFl})-(\ref{eq:CDFu}) by $\mathcal{S}^{(i)}_l(m)$
and $\mathcal{S}^{(i)}_u(m)$, respectively. By denoting the CDF of
$\mathcal{S}^{(i)}_l(m)$ by $F^{(i)}_l(x;m)$ and that of
$\mathcal{S}^{(i)}_u(m)$ by $F^{(i)}_u(x;m)$, for $i=1,\dots, N$ we
have ~\cite{Arnold:Book}
\begin{align}
  \label{eq:CDF:Lj} F_l^{(i)}(x;m) &= \sum_{j=0}^{i-1}{N\choose
  j}\Big(F_l(x;m)\Big)^{N-j}\Big(1-F_l(x;m)\Big)^j,\\
  \label{eq:CDF:Uj}F_u^{(i)}(x;m) &= \sum_{j=0}^{i-1}{N\choose
  j}\Big(F_u(x;m)\Big)^{N-j}\Big(1-F_u(x;m)\Big)^j.
\end{align}
By invoking the above definitions, (\ref{eq:RUmax_bounds1}) and (\ref{eq:RUmax_bounds2}) can be
re-written as
\begin{align}
    \label{eq:RUmax_bounds2_1}&R_{\max}^u\geq\sum_{m=1}^{M}\int_0^\infty\log(1+x)\;dF_l^{(1)}(x;m),\\
    \label{eq:RUmax_bounds2_2}\mbox{and}\;\;&R_{\max}^u\leq\sum_{m=1}^{M}\int_0^\infty\log(1+x)\;dF_u^{(1)}(x;m).
\end{align}
We also define
\begin{equation}\label{eq:G}
    G(x)\dff1-e^{-x},
\end{equation}
and let $G^{(i)}(x)$ denote the CDF of the $i^{th}$ order statistic of
a statistical sample with $N$ members and with parent distribution
$G(x)$. By using this definition we offer the following lemma which is
a key step in finding how $R^u_{\max}$ scales with increasing $N$.
\begin{lemma}
\label{lemma:G} For the distributions $F_l^{(1)}(x;m)$, $F_u^{(1)}(x;m)$
and $G^{(1)}(x)$ we have
\begin{align}
  \nonumber\int_0^\infty\log(1+x)\;&dF_u^{(1)}(x;m)  \\
  \label{eq:lemma:G1}&\leq\int_0^\infty\log(1+\rho\eta_{\max}
  x)\;dG^{(1)}(x),\\
  \nonumber\mbox{and}\;\;\int_0^\infty\log(1+x)\;&dF_l^{(1)}(x;m)  \\
  \nonumber&\geq\int_0^\infty\log(1+\rho\eta_{\min}x)\;dG^{(1)}(x)\\
  \label{eq:lemma:G2}&-\log\bigg[1+\frac{K_mP_p}{P_s}\gamma_{\max}\rho\eta_{\min}\bigg]. \end{align}
\end{lemma}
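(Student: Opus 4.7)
The plan is to compare $S_l(m,n)$ and $S_u(m,n)$ against the simpler variables $|g^m_n|^2$, whose maximum over $n$ has CDF $G^{(1)}$. This way, both sides of each inequality get expressed in terms of the same canonical exponential order statistic and the messy interference terms in~\eqref{eq:Sl}--\eqref{eq:Su} are either absorbed or traded for a single correction.

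Inequality~\eqref{eq:lemma:G1} is the easy direction. Dropping the nonnegative interference sum in the denominator of~\eqref{eq:Su} yields the pointwise bound $S_u(m,n)\leq \rho\eta_{\max}|g^m_n|^2$. Taking $\max_n$ of both sides, using monotonicity of $\log(1+\cdot)$, and noting that $\max_n|g^m_n|^2$ has CDF $G^{(1)}$ gives
\[
\bbe\bigl[\log(1+\mathcal{S}^{(1)}_u(m))\bigr]\leq \bbe\bigl[\log(1+\rho\eta_{\max}\max_n|g^m_n|^2)\bigr]=\int_0^\infty\log(1+\rho\eta_{\max}x)\,dG^{(1)}(x),
\]
which is the desired claim since the left-hand integral in~\eqref{eq:lemma:G1} is $\bbe[\log(1+\mathcal{S}^{(1)}_u(m))]$.

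Inequality~\eqref{eq:lemma:G2} is the main obstacle, because the interference sum now sits in the numerator of $1+S_l(m,n)$ and cannot simply be dropped. The plan is to write $1+S_l(m,n)=(1+a_n+b_n)/(1+b_n)$ with $a_n\dff\rho\eta_{\min}|g^m_n|^2$ and $b_n\dff\rho\eta_{\min}\tfrac{P_p}{P_s}\gamma_{\max}\sum_{j\in\mathcal{B}_m}|h^m_{n,j}|^2$, and use $\log(1+a_n+b_n)\geq\log(1+a_n)$ to obtain the separation $\log(1+S_l(m,n))\geq\log(1+a_n)-\log(1+b_n)$. Taking $\max_n$ on the left would keep the two terms on the right entangled through the argmax, so the key trick is to lower bound the maximum by its value at the \emph{specific} index $n^\star\dff\arg\max_n|g^m_n|^2$. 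This decouples things: $a_{n^\star}=\rho\eta_{\min}\max_n|g^m_n|^2$, whose expectation of $\log(1+\cdot)$ is exactly $\int_0^\infty\log(1+\rho\eta_{\min}x)\,dG^{(1)}(x)$.

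What remains is to bound $\bbe[\log(1+b_{n^\star})]$ from above by the correction term in~\eqref{eq:lemma:G2}. Here I would invoke Jensen's inequality: concavity of $\log$ gives $\bbe[\log(1+b_{n^\star})]\leq\log(1+\bbe[b_{n^\star}])$. The crucial fact making this clean is that $n^\star$ depends only on $\{|g^m_n|^2\}_n$ and is therefore independent of $\{|h^m_{n,j}|^2\}_{n,j}$, so $\sum_{j\in\mathcal{B}_m}|h^m_{n^\star,j}|^2$ has the same distribution as a sum of $K_m$ unit-mean exponentials; its mean is $K_m$, which produces the bound $\log\bigl(1+\tfrac{K_mP_p}{P_s}\gamma_{\max}\rho\eta_{\min}\bigr)$ and closes the argument. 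The conceptually subtle step is the decoupling via $n^\star$: choosing it as the maximizer of $|g^m_n|^2$ rather than of $S_l(m,n)$ loses something, but exactly what is needed so that the penalty remains independent of $N$ and is absorbed into the additive constant.
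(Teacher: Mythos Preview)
Your argument is correct, and it takes a genuinely different route from the paper's proof.

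The paper works analytically with the explicit CDFs of Lemma~\ref{lemma:CDF}. For~\eqref{eq:lemma:G2} it uses the bound $\ln\bigl(\tfrac{P_p}{P_s}\gamma_{\max}x+1\bigr)\leq \tfrac{P_p}{P_s}\gamma_{\max}(x+1)$ to show
\[
F_l^{(1)}(x;m)\ \leq\ G^{(1)}\!\Bigl(\tfrac{x}{\rho\eta_{\min}}+\tfrac{K_mP_p}{P_s}\gamma_{\max}(x+1)\Bigr),
\]
inverts this CDF comparison into a quantile inequality $\bigl(F_l^{(1)}\bigr)^{-1}(u;m)\geq c_1\bigl(G^{(1)}\bigr)^{-1}(u)-c_2$, manipulates it into the form $\log\bigl[1+(F_l^{(1)})^{-1}(u;m)\bigr]\geq\log\bigl[1+\rho\eta_{\min}(G^{(1)})^{-1}(u)\bigr]-\log\bigl[1+\tfrac{K_mP_p}{P_s}\gamma_{\max}\rho\eta_{\min}\bigr]$, and integrates over $u\in[0,1]$. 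For~\eqref{eq:lemma:G1} the same quantile trick is applied to $F_u(x;m)\geq G(x/\rho\eta_{\max})$.

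Your approach is probabilistic: for~\eqref{eq:lemma:G1} you couple $S_u(m,n)$ to $\rho\eta_{\max}|g^m_n|^2$ pointwise; for~\eqref{eq:lemma:G2} you use the algebraic identity $1+S_l(m,n)=(1+a_n+b_n)/(1+b_n)$, evaluate the maximum at the index $n^\star=\arg\max_n|g^m_n|^2$, and exploit the independence of $n^\star$ from the interference variables together with Jensen to control the penalty term. This is more elementary and bypasses the CDF formulas of Lemma~\ref{lemma:CDF} entirely. The paper's CDF route, on the other hand, has the advantage that it transfers verbatim to Lemma~\ref{lemma:G2} (the mixture $G^N$ of several order-statistic CDFs needed for the distributed scheme) via the monotonicity result of Lemma~\ref{lemma:f}; your coupling-at-$n^\star$ trick is tailored to the top order statistic and would need a separate argument there.
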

\begin{proof}
By using the definitions of $F_l^{(1)}(x;m)$ and $F_u^{(1)}(x;m)$ given
in (\ref{eq:CDF:Lj})-(\ref{eq:CDF:Uj}) and using the result of
Lemma~\ref{lemma:CDF} we get
\begin{align}
\nonumber  F_l^{(1)}&(x;m) =\Big(F_l(x;m)\Big)^N \\
\nonumber &=\Bigg[1-\exp\bigg[-\frac{x}{\rho\eta_{\min}}-K_m\underset{\leq\frac{P_p}{P_s}\gamma_{\max}(x+1)}
{\underbrace{\ln\left(\frac{P_p}{P_s}\gamma_{\max}x+1\right)}}\bigg]\Bigg]^N\\
\nonumber &\leq \Bigg[1-\exp\bigg[-\frac{x}{\rho\eta_{\min}}-\frac{K_mP_p}{P_s}\gamma_{\max}(x+1)
\bigg]\Bigg]^N\\
\nonumber&=
\Bigg[G\bigg(\frac{x}{\rho\eta_{\min}}+\frac{K_mP_p}{P_s}\;\gamma_{\max}(x+1)\bigg)\Bigg]^N\\
\label{eq:FG1}&=G^{(1)}\bigg(\frac{x}{\rho\eta_{\min}}+\frac{K_mP_p}{P_s}\;\gamma_{\max}(x+1)\bigg).
\end{align}
Now, by using (\ref{eq:FG1}) and by looking at the solutions $x$ and $x'$ of the equations
\begin{align*}
    u & = F_l^{(1)}(x;m),\\
    \mbox{and}\quad u & = G^{(1)}\bigg(\frac{x}{\rho\eta_{\min}}+\frac{K_mP_p}{P_s}\;\gamma_{\max}(x+1)\bigg).
\end{align*}
we find that $x\geq x'$, or equivalently
\begin{equation*}
    \Big(F_l^{(1)}\Big)^{-1}(u;m)\geq\frac{\Big(G^{(1)}\Big)^{-1}(u)- \frac{K_mP_p}{P_s}\gamma_{\max}}{\frac{1}{\rho\eta_{\min}}+\frac{K_mP_p}{P_s}\gamma_{\max}},
\end{equation*}
which after some simple manipulations leads to
\begin{align*}
    \nonumber\log\bigg[\Big(F_l^{(1)}\Big)^{-1}(u;m)+1\bigg]&\geq
    \log\bigg[\rho\eta_{\min}\Big(G^{(1)}\Big)^{-1}(u)+1\bigg]\\
    &-\log\bigg[1+\frac{K_mP_p}{P_s}\gamma_{\max}\rho\eta_{\min}\bigg].
\end{align*}
Therefore, for the lower bound on $R^u_{\max}$ given in
(\ref{eq:RUmax_bounds2_1}) we have
\begin{align*}
\int_0^\infty\log(1+x)\;&dF_l^{(1)}(x;m)\\
&= \int_0^1\log\bigg[1+\Big(F_l^{(1)}\Big)^{-1}(u;m)\bigg]du \\
&\geq
\int_0^1\log\bigg[\rho\eta_{\min}\Big(G^{(1)}\Big)^{-1}(u)+1\bigg]du\\
&-\log\bigg[1+\frac{K_mP_p}{P_s}\gamma_{\max}\rho\eta_{\min}\bigg],
\end{align*}
which is the desired inequality in (\ref{eq:lemma:G2}).

Now, note that $F_u(x;m)\geq G(\frac{x}{\rho\eta_{\max}})$ or
equivalently,
\begin{equation}\label{eq:lemma:G4}
    \Big(F_u^{(1)}\Big)^{-1}(u;m)\leq \rho\eta_{\max}\Big(G^{(1)}\Big)^{-1}(u).
\end{equation}
Therefore,
\begin{align*}
  \int_0^\infty\log(1+x)\;&dF_u^{(1)}(x;m) \\
  &= \int_0^1\log\bigg[1+\Big(F_u^{(1)}\Big)^{-1}(u;m)\bigg]du \\
  &\leq  \int_0^1\log\bigg[1+\rho\eta_{\max}\Big(G^{(1)}\Big)^{-1}(u)\bigg]du\\
  &=\int_0^\infty\log(1+\rho\eta_{\max}x)\;dG^{(1)}(x),
\end{align*}
which establishes the inequality in (\ref{eq:lemma:G1}) and completes
the proof.
\end{proof}

Next, by using the result of the following lemma, we establish the
scaling law of $R^u_{\max}$.

\begin{lemma}
\label{lemma:exp_scaling} For a family of exponentially distributed
random variables of size $N$ and parent distribution $G(x)$ (CDF) and
for any positive real number $a\in\mathbb{R}_+$ we have
\begin{equation*}
    \int_0^\infty\log(1+a x)dG^{(1)}(x)\doteq\log\log N+\log a.
\end{equation*}
\end{lemma}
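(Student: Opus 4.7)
The plan is to sandwich the integral between two bounds, both of which are asymptotic to $\log\log N + \log a$, using standard facts about the maximum of $N$ i.i.d.\ standard exponentials. Denote by $X_{(N)}$ a random variable with CDF $G^{(1)}$, i.e.\ the maximum of $N$ i.i.d.\ $\mathrm{Exp}(1)$ variables, so that the integral equals $\mathbb{E}[\log(1+aX_{(N)})]$.

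For the upper bound I would apply Jensen's inequality to the concave function $\log(1+a\cdot)$ to obtain
\begin{equation*}
\mathbb{E}[\log(1+aX_{(N)})] \;\leq\; \log\bigl(1 + a\,\mathbb{E}[X_{(N)}]\bigr) \;=\; \log(1 + a H_N),
\end{equation*}
where $H_N = \sum_{k=1}^N 1/k$ is the mean of the maximum of $N$ i.i.d.\ unit exponentials. Since $H_N \doteq \log N$, it follows that $\log(1 + a H_N) \doteq \log(a\log N) = \log a + \log\log N$, which gives the desired upper bound.

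For the lower bound I would use a simple concentration argument. Set $c_N \dff \log N - \log\log N$; then
\begin{equation*}
P\bigl(X_{(N)} \geq c_N\bigr) \;=\; 1 - \bigl(1 - e^{-c_N}\bigr)^N \;=\; 1 - \Bigl(1 - \tfrac{\log N}{N}\Bigr)^N \;\longrightarrow\; 1,
\end{equation*}
since $(1 - \tfrac{\log N}{N})^N \sim e^{-\log N} = 1/N \to 0$. Because $\log(1+ax)$ is non-negative and non-decreasing for $x \geq 0$, I can bound
\begin{equation*}
\mathbb{E}[\log(1+aX_{(N)})] \;\geq\; \log(1 + a c_N)\cdot P\bigl(X_{(N)} \geq c_N\bigr).
\end{equation*}
The probability tends to $1$, and $\log(1 + a c_N) \doteq \log(a\log N) = \log a + \log\log N$, so the lower bound is also asymptotic to $\log\log N + \log a$.

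Combining the two bounds yields the claimed asymptotic equality. The only genuinely non-routine step is the concentration argument for the lower bound, where one must choose the threshold $c_N$ carefully: it should be close enough to $\log N$ that $\log(1+ac_N) \doteq \log\log N + \log a$, yet small enough that $P(X_{(N)} \geq c_N) \to 1$. The choice $c_N = \log N - \log\log N$ meets both requirements, which is the crux of the argument.
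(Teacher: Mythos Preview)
Your argument is correct. The upper bound via Jensen's inequality and the lower bound via the truncation $c_N=\log N-\log\log N$ both work as stated, and together they establish the asymptotic equality.

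The paper's proof proceeds differently: it invokes an external result (a theorem from Sanayei--Nosratinia) asserting that whenever a sequence of nonnegative random variables $X_N$ has $\mu_N\to\infty$ and $\sigma_N/\mu_N\to 0$, one has $\mathbb{E}[\log(1+\alpha X_N)]\doteq\log(1+\alpha\mu_N)$. The paper then simply checks these hypotheses for the maximum of $N$ exponentials using the classical formulas $\mu_N=\sum_{n=1}^N 1/n$ and $\sigma_N^2=\sum_{n=1}^N 1/n^2$. Your approach is more self-contained: you avoid citing the external lemma and instead prove the required concentration directly with an explicit threshold. The paper's route is shorter on the page because the work is hidden in the citation, and it has the advantage that the same cited theorem is reused later (for Lemma~\ref{lemma:exp_scaling2}); your route is more elementary and transparent, requiring nothing beyond Jensen and a one-line tail computation for the exponential maximum.
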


\begin{proof}
See Appendix \ref{app:lemma:exp_scaling}.
\end{proof}
Now, by recalling the bounds provided in (\ref{eq:RUmax_bounds2_1}) and (\ref{eq:RUmax_bounds2_2}) and taking into account the results of Lemmas \ref{lemma:D}, \ref{lemma:G} and \ref{lemma:exp_scaling} we find the optimal throughput scaling law of cognitive networks.

\begin{theorem}
\label{th:centralized}

In a centralized cognitive network with $N$ secondary transmitter-receiver pairs and $M$ available spectrum bands, by optimal
user-channel assignments, the sum-rate throughput of the network scales
as
\begin{equation*}
    R_{\max}\doteq M\log\log N.
\end{equation*}
\end{theorem}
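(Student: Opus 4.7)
The plan is to sandwich $R_{\max}$ between two quantities that both scale as $M\log\log N$, using the chain of lemmas already established. First I would observe that combining (\ref{eq:RLmax}) with the trivial bound $R_{\max}\leq R^u_{\max}$ and invoking Lemma~\ref{lemma:D} reduces the problem to showing $R^u_{\max}\doteq M\log\log N$; the multiplicative factor $P(\mathcal{D})$ in $R^l_{\max}$ is harmless by Lemma~\ref{lemma:D}, so only the scaling of $R^u_{\max}$ needs to be pinned down.

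Next I would use the two-sided bound on $R^u_{\max}$ provided by (\ref{eq:RUmax_bounds2_1}) and (\ref{eq:RUmax_bounds2_2}), namely
\begin{equation*}
\sum_{m=1}^M\int_0^\infty\log(1+x)\,dF_l^{(1)}(x;m)\;\leq\;R^u_{\max}\;\leq\;\sum_{m=1}^M\int_0^\infty\log(1+x)\,dF_u^{(1)}(x;m).
\end{equation*}
To each summand I would apply Lemma~\ref{lemma:G}, which replaces the complicated $\sinr$-based distributions by the much cleaner exponential order statistic $G^{(1)}$, at the cost of only an additive $O(1)$ term on the lower side and no penalty on the upper side. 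Then Lemma~\ref{lemma:exp_scaling}, applied once with $a=\rho\eta_{\max}$ and once with $a=\rho\eta_{\min}$, converts each exponential-order-statistic integral into $\log\log N + \log a + o(1)$.

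Putting the pieces together gives, for each $m$,
\begin{align*}
\log\log N+\log(\rho\eta_{\min})-\log\!\bigl[1+\tfrac{K_mP_p}{P_s}\gamma_{\max}\rho\eta_{\min}\bigr]+o(1)\;\leq\;&\int_0^\infty\log(1+x)\,dF_l^{(1)}(x;m),\\
\int_0^\infty\log(1+x)\,dF_u^{(1)}(x;m)\;\leq\;&\log\log N+\log(\rho\eta_{\max})+o(1),
\end{align*}
and summing over $m=1,\dots,M$ sandwiches $R^u_{\max}$ between $M\log\log N + C_1$ and $M\log\log N + C_2$ for constants $C_1,C_2$ that do not depend on $N$. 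Dividing through by $M\log\log N$ and taking $N\to\infty$ yields $R^u_{\max}\doteq M\log\log N$, and one final appeal to Lemma~\ref{lemma:D} gives the same scaling for $R_{\max}$.

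The only subtlety I anticipate is purely bookkeeping: one must be careful that the $O(1)$ offsets coming from the power/path-loss constants ($\rho,\eta_{\min},\eta_{\max},\gamma_{\max},P_p/P_s,K_m$) and from the $o(1)$ error term in Lemma~\ref{lemma:exp_scaling} remain bounded uniformly in $m$ (which they do, since $M$ and $K_m$ are fixed as $N\to\infty$), so that they become negligible relative to $M\log\log N$. No genuine new estimation is needed; the theorem is essentially a corollary of Lemmas~\ref{lemma:D}, \ref{lemma:G}, and \ref{lemma:exp_scaling}.
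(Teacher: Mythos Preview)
Your proposal is correct and follows essentially the same route as the paper: sandwich $R^u_{\max}$ via (\ref{eq:RUmax_bounds2_1})--(\ref{eq:RUmax_bounds2_2}), apply Lemma~\ref{lemma:G} to pass to the exponential order statistic $G^{(1)}$, invoke Lemma~\ref{lemma:exp_scaling} to extract $\log\log N + O(1)$ per band, sum over $m$, divide by $M\log\log N$, and finish with Lemma~\ref{lemma:D}. Your per-$m$ bookkeeping of the additive constants is actually slightly tidier than the paper's, which writes the $K_m$-dependent offset inside an expression already multiplied by $M$, but the substance is identical.
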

\begin{proof}
By invoking the results of Lemmas~\ref{lemma:G}~and~\ref{lemma:exp_scaling} on the lower and upper bounds on $R^u_{\max}$ given in (\ref{eq:RUmax_bounds2_1}) and (\ref{eq:RUmax_bounds2_2}) we find
\begin{align*}
    &R^u_{\max}\;\dotgt M\log\log
    N-M\log\bigg[\frac{1}{\rho\eta_{\min}}+\frac{K_mP_p}{P_s}\gamma_{\max}\bigg],\\
    \mbox{and}\quad&R^u_{\max} \dotlt M\log\log N+M\log(\rho\eta_{\max}),
\end{align*}
or equivalently,
\begin{align*}
    \lim_{N\rightarrow\infty}\frac{R^u_{\max}}{M\log\log
    N}&\geq 1-\lim_{N\rightarrow\infty}\frac{\log\bigg[1+\frac{K_mP_p}{P_s} \gamma_{\max}\rho\eta_{\min}\bigg]}{\log\log
    N}\\
    &=1,
\end{align*}
and
\begin{align*}
    \lim_{N\rightarrow\infty}\frac{R^u_{\max}}{M\log\log
    N}\leq \;1+\lim_{N\rightarrow\infty}\frac{\log(\rho\eta_{\max})}{\log\log N}=1,
\end{align*}
which confirms that $R^u_{\max}\doteq M\log\log N$. This result, along
with what stated in Lemma~\ref{lemma:D} concludes that $R^l_{\max}\doteq
R^u_{\max}\doteq M\log\log N$, which establishes the proof of the
theorem.
\end{proof}

So far, we have assumed that there exists a decision-making center
that has full knowledge of  all instantaneous channel realizations, i.e., $\{h^m_{i,j}\}$, $\{\gamma_{i,j}\}$, and $\{\eta_i\}$. Also there is no complexity constraints in order to enable exhausting all the possible user-channel assignments and choosing the one which maximizes the sum-throughput of the network.

The assumptions made in this section, while not being practical, are
useful in shedding light on the sum-throughput limit of such cognitive
networks. The results provided in this section can be exploited as the
benchmark to quantify the efficiency of our distributed algorithm
proposed in the following section.

\section{Distributed Spectrum Allocation}
\label{sec:distributed}

In this section we offer our distributed algorithm, where each user
independently of others, makes decision regarding taking over
transmission on any specific spectrum band. We analyze the achievable
sum-throughput of the cognitive network when this distributed algorithm is utilized and show that it is asymptotically optimal.

\subsection{Distributed Algorithm}
\label{sec:algorithm}

In order to refrain from exhaustively searching for the best user-channel matches, we consider assigning the available spectrum bands to only secondary users with a pre-determined minimum link strength. The distributed algorithm involves two major steps; normalizing the $\sinr$s and comparing it with a given threshold.  The underlying motivation for normalizing the $\sinr$s is to balance fairness among the secondary users, in the sense that they get equal opportunities for accessing the spectrum. It so happens that some transmitter-receiver pairs have a very strong link and some other pair a very weak link. This becomes even more likely when we have a large number of secondary pairs. In such scenarios if spectrum allocation is carried out merely based on the links' strengths, all the strong users will dominate the network and the weak users will hardly have an opportunity for accessing it. So for maintaining fairness, instead of comparing the links' strengths (or $\sinr$s), we compare normalized $\sinr$s. However, it should be noted that the normalization factors have to be designed carefully such that we do not sacrifice achieving the optimal scaling in favor of achieving fairness. In other words, the objective is to achieve the optimal scaling and fairness simultaneously.

In the next step, the normalized $\sinr$s are compared against a pre-determined threshold level and only the users with strong enough links that satisfy the threshold condition will take part in the competition for accessing the spectrum.

Specifically, to each user $n=1,\dots, N$ and channel $m=1,\dots, M$,
we assign a minimum acceptable level of $\sinr$, denoted by
$\lambda(m,n)$, which is defined as follows.

Let $T(x;m,n)$ denote the CDF of $\sinr_{m,n}$ given in
(\ref{eq:sinr}). $\lambda(m,n)$ is set such that we have
\begin{equation}\label{eq:lambda}
    T\Big(\lambda(m,n);m,n\Big)=1-\frac{1}{N}.
\end{equation}
Note that for any given $m$ and $n$, $T(x;m,n)$ is a non-decreasing
function on $[0,+\infty)\times[0,1]$ which ensures that there always
exists a unique solution for $\lambda(m,n)$. Also note that as
$\sinr_{m,n}$ depends only on the incoming channels to the $n^{th}$
secondary receiver on the $m^{th}$ spectrum band, $\lambda(m,n)$ can be computed locally at the $n^{th}$ secondary receiver and does not impose any information exchange between the secondary users. It is assumed that the secondary users are aware of the number of secondary pairs $N$ in the cognitive network.

Now, each user $n$ computes $\sinr_{1,n},\dots,\sinr_{M,n}$, normalizes them via dividing them by $\lambda(1,n)$ $,\dots,\lambda(M,n)$, respectively, and identifies the channel with the largest normalized $\sinr$, and denotes its index by $m^\dag_n$, i.e.,
\begin{equation}\label{eq:m_hat}
    m^\dag_n\dff\arg\max_{m\in\{1,\dots,M\}}\left\{\frac{\sinr_{m,n}}{\lambda(m,n)}\right\}.
\end{equation}
In the next step, the $n^{th}$ user compares $\sinr_{m^\dag_n,n}$
against $\lambda(m^\dag_n,n)$ and if $\sinr_{m^\dag_n,n}\geq
\lambda(m^\dag_n,n)$, then deems itself as a candidate for accessing
the channel indexed by $m^\dag_n$. Based on the definition in (\ref{eq:m_hat}) we define the mutually disjoint sets $\mathcal{H}_m$ for $m=1,\dots,M$, such that $\mathcal{H}_m$ contains the indices of the users deemed as candidates for taking over the $m^{th}$ channel, i.e.,
\begin{equation*}
    \mathcal{H}_m\dff\left\{n\;\Big |\; \arg\max_{m'}\frac{\sinr_{m',n}}{\lambda(m',n)}=m\;\&\;\frac{\sinr_{m,n}|}{\lambda(m,n)}\geq 1 \right\}.
\end{equation*}
Finally, a user with its index in $\mathcal{H}_m$ is \emph{randomly}
opted for utilizing the $m^{th}$ channel. This can be facilitated in a
distributed way via any contention based random media access method,
e.g., Aloha, carrier sense multiple access, etc. As soon as one user
takes a channel, the other users will no longer try to access that
channel. In the following section, we analyze the sum-rate throughput
scaling factor of the proposed algorithm.

\subsection{Sum-Rate Throughput Scaling}
\label{sec:throughput}

We denote the sum-rate throughput by $R_{\rm sum}$ and denote the
throughput of the $m^{th}$ channel by $R_m$. Note that the construction of $\mathcal{H}_m$ guarantees that no single user will be regarded as a candidate for more than one channel and also we have $R_{\rm
sum}=\sum_{m=1}^{M}R_m$. By defining $R_{m\med \mathcal{H}_m}$ as the
throughput achieved for the $m^{th}$ conditioned on having
users with indices in $\mathcal{H}_m$ be candidates for taking over
$B_m$ we have
\begin{equation}\label{eq:Rm1}
    R_m=\sum_{\mathcal{H}_m\subseteq \mathcal{N},\ \mathcal{H}_m\neq \emptyset}R_{m\med
    \mathcal{H}_m}P(\mathcal{H}_m).
\end{equation}
On the other hand, by noting that one member of $\mathcal{H}_m$ will be \emph{randomly} picked for accessing $B_m$ we get
\begin{equation}\label{eq:R_Hm}
    R_{m\med \mathcal{H}_m}=\frac{1}{|\mathcal{H}_m|}\sum_{i\in\mathcal{H}_m}
    \bbe\bigg[\log\big(1+\sinr_{m,i}\Big)\;\Big
    |\;\mathcal{H}_m\bigg].
\end{equation}
From (\ref{eq:Rm1}) and (\ref{eq:R_Hm}) for any  $\mathcal{H}_m\neq\emptyset$ we get
\begin{equation}\label{eq:Rm2}
    R_m=\sum_{\mathcal{H}_m\subseteq \mathcal{N}}
    \frac{P(\mathcal{H}_m)}{|\mathcal{H}_m|}\sum_{i\in\mathcal{H}_m} \bbe\bigg[\log\big(1+\sinr_{m,i}\Big)\;\Big
    |\;\mathcal{H}_m\bigg].
\end{equation}
As shown in Appendix \ref{app:lower} we have
\begin{align}
    \nonumber \sum_{i\in\mathcal{H}_m}\bbe\bigg[\log\Big(1+&\sinr_{m,i}\Big)\;\bigg|\;
    \mathcal{H}_m\bigg]\\
    \label{eq:Rm_lower} \geq &
    \sum_{i=1}^{|\mathcal{H}_m|}\bbe\bigg[\log\Big(1+\sinr_m^{(i)}\Big)\;\bigg].
\end{align}
Therefore, (\ref{eq:Rm2}) and (\ref{eq:Rm_lower}) together give rise
to the following lower bound on $R_m$
\begin{align}\label{eq:Rm3}
    \nonumber R_m &\geq\sum_{\mathcal{H}_m\subseteq \mathcal{N},\
    \mathcal{H}_m\neq \emptyset}\frac{P(\mathcal{H}_m)}{|\mathcal{H}_m|}
    \sum_{i=1}^{|\mathcal{H}_m|}\bbe\bigg[\log\Big(1+\sinr_m^{(i)}\Big)\;\bigg]\\
    \nonumber & = \sum_{n=1}^N\sum_{|{\cal H}_m|=n}\frac{P(\mathcal{H}_m)}{|\mathcal{H}_m|}
    \sum_{i=1}^{|\mathcal{H}_m|}\bbe\bigg[\log\Big(1+\sinr_m^{(i)}\Big)\;\bigg]\\
    \nonumber & = \sum_{n=1}^N \sum_{i=1}^{n}\bbe\bigg[\log\Big(1+\sinr_m^{(i)}\Big)\;\bigg]\sum_{|{\cal H}_m|=n}\frac{P(\mathcal{H}_m)}{n}\\
    \nonumber & = \sum_{n=1}^N \sum_{i=1}^{n}\bbe\bigg[\log\Big(1+\sinr_m^{(i)}\Big)\;\bigg]\frac{P(|\mathcal{H}_m|=n)}{n}\\ & = \sum_{i=1}^{N}\bbe\bigg[\log\Big(1+\sinr_m^{(i)}\Big)\;\bigg]\sum_{n=i}^N \frac{P(|\mathcal{H}_m|=n)}{n}.
\end{align}
By further defining
\begin{equation}\label{eq:q1}
    Q^m_i\dff \sum_{n=i}^N\frac{1}{n}P\Big(|\mathcal{H}_m|=n\Big),
\end{equation}
and
\begin{equation}\label{eq:q2}
    R_m^l\dff\sum_{i=1}^N Q^m_i \bbe\bigg[\log\Big(1+\sinr_m^{(i)}\Big)\;\bigg],
\end{equation}
we can re-write (\ref{eq:Rm3}) as $R_m\geq R^l_m$. If we also
define $Q_0=P\Big(|\mathcal{H}_m|=0\Big)$ we get
\begin{align*}
    \sum_{i=0}^NQ^m_i&=P\Big(|\mathcal{H}_m|=0\Big)+\sum_{i=1}^N
    \sum_{n=i}^N\frac{1}{n}P\Big(|\mathcal{H}_m|=n\Big)\\
    &=
    \sum_{n=0}^NP\Big(|\mathcal{H}_m|=n\Big)=1,
\end{align*}
which suggests that $\{Q^m_i\}_{i=0}^N$ is a valid probability mass
function (pmf). In the sequel, we concentrate on finding the scaling
behavior of $R^l_m$. By using the definitions of $\mathcal{S}_l(m)$ and
$\mathcal{S}_u(m)$ and exploiting Lemma~\ref{lemma:order}, from (\ref{eq:q2}) we have
\begin{align}
    \label{eq:Rl_bounds1_1}R^l_m\geq \sum_{i=1}^N Q^m_i \bbe\bigg[\log\Big(1+\mathcal{S}_l^{(i)}(m)\Big)\;\bigg],\\
    \label{eq:Rl_bounds1_2}\mbox{and}\;\;\;R^l_m\leq \sum_{i=1}^N Q^m_i
    \bbe\bigg[\log\Big(1+\mathcal{S}_u^{(i)}(m)\Big)\;\bigg].
\end{align}
By recalling that the CDFs of $\mathcal{S}^{(i)}_l(m)$ and
$\mathcal{S}^{(i)}_u(m)$ are $F^{(i)}_l(x;m)$ and $F^{(i)}_u(x;m)$
provided in (\ref{eq:CDF:Lj}) and (\ref{eq:CDF:Uj}), respectively, (\ref{eq:Rl_bounds1_1}) and (\ref{eq:Rl_bounds1_2}) can
be stated as
\begin{align}
    \label{eq:Rl_bounds2_1} R^l_m\geq \sum_{i=1}^NQ^m_i\int_0^1\log(1+x)\;dF^{(i)}_l(x;m)\\ \label{eq:Rl_bounds2_2} \mbox{and}\;\;\;R^l_m\leq \sum_{i=1}^NQ^m_i\int_0^1\log(1+x)\;dF^{(i)}_u(x;m),
\end{align}
Next, for the given set of $\{Q^m_i\}$ we define
\begin{align}
    \label{eq:CDFN1}F^N_l(x;m)&\dff\sum_{i=1}^NQ^m_iF^{(i)}_l(x;m),\\
    \label{eq:CDFN2}F^N_u(x;m)&\dff\sum_{i=1}^NQ^m_iF^{(i)}_u(x;m),\\
    \label{eq:CDFN3}\mbox{and}\;\;\;G^N(x)&\dff\sum_{i=1}^NQ^m_iG^{(i)}(x).
\end{align}
Since $\{Q^m_i\}$ is a valid pmf, $F^N_l(x;m)$, $F^N_u(x;m)$, and
$G^N(x)$ can be cast as valid CDFs. Therefore, (\ref{eq:Rl_bounds2_1}) and (\ref{eq:Rl_bounds2_2}) give rise to
\begin{equation}\label{eq:Rl_bounds2}
    \int_0^1\log(1+x)\;dF^N_l(x;m)\leq R^l_m \leq\int_0^1\log(1+x)\;dF^N_u(x;m).
\end{equation}
The two subsequent lemmas are key in finding how $R^l_m$ scales with
increasing $N$.

\begin{lemma}
\label{lemma:f}
For a real variable $x\in[0,1]$ and integer variables $N$ and $i$,
$0\leq i\leq N-1$, the function
\begin{equation*}
    f(x,i)\dff\sum_{j=0}^i{N\choose j}x^{N-j}(1-x)^j,
\end{equation*}
is increasing in $x$.
\end{lemma}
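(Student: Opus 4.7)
My plan is to prove monotonicity by computing $\partial f/\partial x$ directly and showing that the sum telescopes to a single manifestly non-negative term. There is a parallel route via a probabilistic interpretation: after the substitution $k=N-j$, one can write $f(x,i)=\Pr[B_{N,x}\geq N-i]$ where $B_{N,x}$ is Binomial$(N,x)$, and then invoke stochastic dominance of $B_{N,x_1}$ by $B_{N,x_2}$ for $x_1\leq x_2$ (obtainable from the standard coupling by i.i.d.\ uniform thresholds). However, since only a derivative bound is needed, the calculus argument is cleanest and self-contained.

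Concretely, I would differentiate term by term:
\begin{equation*}
  \frac{\partial f}{\partial x}(x,i) \;=\; \sum_{j=0}^{i}\binom{N}{j}\Bigl[(N-j)x^{N-j-1}(1-x)^{j} - j\,x^{N-j}(1-x)^{j-1}\Bigr].
\end{equation*}
Using the identities $\binom{N}{j}(N-j)=N\binom{N-1}{j}$ and $\binom{N}{j}j=N\binom{N-1}{j-1}$, the right-hand side becomes
\begin{equation*}
  N\sum_{j=0}^{i}\binom{N-1}{j}x^{N-1-j}(1-x)^{j} \;-\; N\sum_{j=1}^{i}\binom{N-1}{j-1}x^{N-j}(1-x)^{j-1}.
\end{equation*}
Reindexing the second sum with $j'=j-1$ yields $\sum_{j'=0}^{i-1}\binom{N-1}{j'}x^{N-1-j'}(1-x)^{j'}$, so almost every term cancels against the first sum. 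The telescoping leaves the single $j=i$ contribution, giving
\begin{equation*}
  \frac{\partial f}{\partial x}(x,i) \;=\; N\binom{N-1}{i}\,x^{N-1-i}(1-x)^{i},
\end{equation*}
which is non-negative on $[0,1]$ (and strictly positive on $(0,1)$), since $0\leq i\leq N-1$ ensures $\binom{N-1}{i}$ is well defined and the powers have non-negative exponents.

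The only real obstacle is the bookkeeping in the reindexing and confirming that the endpoint terms behave as claimed: specifically, that the $j=0$ contribution from the first sum ($Nx^{N-1}$, interpreting $(1-x)^0=1$) is what one gets from $\binom{N-1}{0}$, and that the constraint $i\leq N-1$ prevents any out-of-range binomial coefficient. Once these endpoints are checked, the telescoping identity is immediate, and monotonicity of $f(\cdot,i)$ follows at once.
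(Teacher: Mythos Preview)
Your proof is correct, and it is genuinely different from---and cleaner than---the paper's argument. You obtain the exact closed form
\[
\frac{\partial f}{\partial x}(x,i)=N\binom{N-1}{i}x^{N-1-i}(1-x)^i,
\]
via the telescoping identity, which immediately settles non-negativity on $[0,1]$. The paper, by contrast, first establishes the symmetry $f(x,i)=1-f(1-x,N-i-1)$ to reduce to $x\leq\tfrac12$, and then splits into the cases $i\leq\lfloor N/2\rfloor$ and $i>\lfloor N/2\rfloor$, bounding the derivative termwise in the first case and invoking the symmetry plus an additional estimate in the second. Your approach avoids the case split and the reduction to $x\leq\tfrac12$ entirely, and as a bonus gives the exact derivative (hence strict monotonicity on $(0,1)$), while the paper's estimates only show non-negativity. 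The probabilistic route you mention (binomial stochastic dominance via the uniform-threshold coupling) is also a valid third alternative, but your calculus argument is the most self-contained of the three.
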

\begin{proof}
See Appendix \ref{app:lemma:f}.
\end{proof}

\begin{lemma}
\label{lemma:G2} For the distributions $F_l^N(x)$, $F_u^N(x)$ and
$G^N(x)$ we have
\begin{align}
  \label{eq:lemma:G2_1}\int_0^\infty\log(1+x)\;dF_u^N(x;m) &\leq \int_0^\infty\log(1+\rho\eta_{\max} x)\;dG^N(x),
\end{align}
and
\begin{align}
  \nonumber \int_0^\infty\log(1+x)\;dF^N_l(x;m) &\geq
  \int_0^\infty\log(1+\rho\eta_{\min}x)\;dG^N(x)\\
  \label{eq:lemma:G2_2}&-
  \log\bigg[1+\frac{K_mP_p}{P_s}\gamma_{\max}\rho\eta_{\min}\bigg].
\end{align}
\end{lemma}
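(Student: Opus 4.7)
The plan is to piggyback on the proof strategy used for Lemma~\ref{lemma:G}, but lift the pointwise parent-CDF inequalities up through the order-statistic transformations and then through the convex combinations that define $F_l^N$, $F_u^N$, and $G^N$. Since $F_l^N$, $F_u^N$, $G^N$ in (\ref{eq:CDFN1})--(\ref{eq:CDFN3}) are mixtures of order-statistic CDFs with the common pmf $\{Q_i^m\}$, if I can obtain the same sandwich inequalities between $F_u^{(i)}$, $F_l^{(i)}$ and $G^{(i)}$ as were used to prove Lemma~\ref{lemma:G} at level $i=1$, but now uniformly in $i$, then the mixtures will inherit the same pointwise ordering, and the rest is mechanical.

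First I would recall from Lemma~\ref{lemma:CDF} and the manipulations inside the proof of Lemma~\ref{lemma:G} the two parent-CDF bounds $F_u(x;m)\geq G(x/(\rho\eta_{\max}))$ and $F_l(x;m)\leq G\bigl(x/(\rho\eta_{\min})+\tfrac{K_mP_p}{P_s}\gamma_{\max}(x+1)\bigr)$. Next I would observe that, via (\ref{eq:CDF:Lj})--(\ref{eq:CDF:Uj}), each order-statistic CDF can be written as $F^{(i)}(x)=f(F(x),i-1)$ with $f$ the function of Lemma~\ref{lemma:f}. Because $f(\cdot,i-1)$ is increasing on $[0,1]$, every pointwise inequality between parent CDFs is transported intact to the $i$-th order-statistic CDF, for every $i\in\{1,\dots,N\}$. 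Taking convex combinations with weights $Q_i^m\geq 0$ then yields $F_u^N(x;m)\geq G^N(x/(\rho\eta_{\max}))$ and $F_l^N(x;m)\leq G^N\bigl(x/(\rho\eta_{\min})+\tfrac{K_mP_p}{P_s}\gamma_{\max}(x+1)\bigr)$.

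From here the argument is essentially a copy of the second half of the proof of Lemma~\ref{lemma:G}. Inverting the two pointwise CDF inequalities gives the quantile bounds $(F_u^N)^{-1}(u;m)\leq \rho\eta_{\max}(G^N)^{-1}(u)$ and $(F_l^N)^{-1}(u;m)\geq \bigl[(G^N)^{-1}(u)-\tfrac{K_mP_p}{P_s}\gamma_{\max}\bigr]\big/\bigl[\tfrac{1}{\rho\eta_{\min}}+\tfrac{K_mP_p}{P_s}\gamma_{\max}\bigr]$. Writing the integrals on both sides of (\ref{eq:lemma:G2_1})--(\ref{eq:lemma:G2_2}) in their quantile form $\int_0^\infty \log(1+x)\,dF(x)=\int_0^1\log(1+F^{-1}(u))\,du$, substituting the quantile bounds, and applying the elementary identity $\log(1+A)-\log(1+B)\leq\log\bigl(1+\tfrac{K_mP_p}{P_s}\gamma_{\max}\rho\eta_{\min}\bigr)$ with $A=(G^N)^{-1}(u)$ and $B$ the shifted-and-scaled lower-bound quantile, delivers exactly (\ref{eq:lemma:G2_1}) and (\ref{eq:lemma:G2_2}).

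The main conceptual obstacle is recognizing that Lemma~\ref{lemma:f} is precisely what allows the single-order-statistic argument of Lemma~\ref{lemma:G} to extend uniformly in $i$, so that the arbitrary mixing weights $\{Q_i^m\}$ (about which nothing is known beyond their being non-negative and summing to one) cause no trouble. A minor technical point I would address in passing is the well-definedness of the inverses $(F_u^N)^{-1}$, $(F_l^N)^{-1}$, $(G^N)^{-1}$: since each order-statistic CDF is continuous and strictly increasing on its support, and $Q_i^m\geq 0$ with $\sum_i Q_i^m\leq 1$, the mixtures are themselves continuous and non-decreasing on $[0,\infty)$, so the generalized inverses are unambiguous and the pointwise-CDF-to-pointwise-quantile inversion used above is valid.
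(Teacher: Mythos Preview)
Your proposal is correct and is essentially the paper's own proof: the paper, too, writes $F_l^{(i)}(x;m)=f\bigl(F_l(x;m),i-1\bigr)$ (and similarly for $F_u^{(i)}$ and $G^{(i)}$), invokes Lemma~\ref{lemma:f} to propagate the parent-CDF inequalities $F_l(x;m)\leq G\bigl(\tfrac{x}{\rho\eta_{\min}}+\tfrac{K_mP_p}{P_s}\gamma_{\max}(x+1)\bigr)$ and $F_u(x;m)\geq G(x/(\rho\eta_{\max}))$ to every order $i$, sums against $\{Q_i^m\}$ to obtain the mixture inequalities, and then defers to the quantile-inversion computation of Lemma~\ref{lemma:G}. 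Your additional remark on the well-definedness of the generalized inverses of the mixtures is a welcome clarification that the paper leaves implicit.
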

\begin{proof}
By using the definition of $f(x,j)$ provided in Lemma~\ref{lemma:f} and
recalling (\ref{eq:CDF:Lj})-(\ref{eq:CDF:Uj}) we have
\begin{align*}
    F_l^{(i)}(x;m)&=f\Big(F_l(x;m),i-1\Big),\\ F_u^{(i)}(x;m)&=f\Big(F_u(x;m),i-1\Big),\\
    \mbox{and}\;\;\;G^{(i)}(x)&=f\Big(G(x),i-1\Big).
\end{align*}
By following the same lines as in (\ref{eq:FG1}) we also have
\begin{eqnarray*}
F_l(x;m) \leq
G\bigg(\frac{x}{\rho\eta_{\min}}+\frac{K_mP_p}{P_s}\;\gamma_{\max}(x+1)\bigg),
\end{eqnarray*}
and consequently by applying Lemma~\ref{lemma:f} and using the
definition in (\ref{eq:CDFN1})-(\ref{eq:CDFN3}) we have
\begin{align}
    \nonumber F^N_l&(x;m)\\
    \nonumber &=\sum_{i=1}^NQ^m_iF^{(i)}_l(x;m)=\sum_{i=1}^NQ^m_if\Big(F_l(x;m),i-1\Big)\\
    \nonumber   &\leq
    \sum_{i=1}^NQ^m_if\Bigg(G\bigg(\frac{x}{\rho\eta_{\min}}+
    \frac{K_mP_p}{P_s}\;\gamma_{\max}(x+1)\bigg),i-1\Bigg)\\
    \nonumber&=\sum_{i=1}^NQ^m_iG^{(i)}\bigg(\frac{x}{\rho\eta_{\min}}+\frac{K_mP_p}{P_s}\;\gamma_{\max}(x+1)\bigg)\\
    \nonumber &=G^N\bigg(\frac{x}{\rho\eta_{\min}}+
    \frac{K_mP_p}{P_s}\;\gamma_{\max}(x+1)\bigg)
\end{align}
By following a similar approach as in Lemma~\ref{lemma:G}, and
(\ref{eq:FG1})-(\ref{eq:lemma:G4}) the inequality in (\ref{eq:lemma:G2_2}) can be established. Proof of (\ref{eq:lemma:G2_1}) follows a similar line of argument.
\end{proof}

\begin{lemma}
\label{lemma:exp_scaling2}

For a family of exponentially distributed random variables of size $N$
and parent distribution $G(x)$ (CDF) and for any set of
$\{Q^m_i\}_{i=1}^N$ such that $\sum_{i=0}^NQ^m_i=1$, if the condition
\begin{equation}\label{eq:Q1}
    \lim_{N\rightarrow\infty}\frac{\sum_{i=1}^NiQ^m_i}{N}=0,
\end{equation}
is satisfied, then for any positive real number $a\in\mathbb{R}_+$ we have
\begin{equation*}
    \int_0^\infty\log(1+a x)dG^N(x)\doteq\log\log N+\log a.
\end{equation*}
\end{lemma}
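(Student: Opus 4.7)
The plan is to decompose the integral via the definition of $G^N$ and then sandwich it between two expressions both asymptotically equivalent to $\log\log N + \log a$. Writing $L_N(i) \dff \bbe[\log(1+aX_{(i)})]$, where $X_{(i)}$ is the $i$-th largest of $N$ i.i.d.\ exponentials with CDF $G$ (so $X_{(i)}$ has CDF $G^{(i)}$), linearity yields
\begin{equation*}
    \int_0^\infty \log(1+ax)\,dG^N(x) \;=\; \sum_{i=1}^N Q^m_i\, L_N(i).
\end{equation*}
The stochastic ordering $X_{(1)} \geq X_{(i)}$ immediately gives $L_N(i) \leq L_N(1)$, and Lemma~\ref{lemma:exp_scaling} delivers $L_N(1) \doteq \log\log N + \log a$. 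Since $\sum_i Q^m_i \leq 1$, this settles the upper asymptotic bound.

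For the matching lower bound I would exploit the fact that hypothesis~(\ref{eq:Q1}) forces the pmf $\{Q^m_i\}$ to concentrate on small indices, where $L_N(i)$ remains close to $L_N(1)$. Letting $\mu_N \dff \sum_{i=1}^N iQ^m_i$, choose an integer $k_N$ with $k_N \to \infty$, $k_N/N \to 0$, and $\mu_N/k_N \to 0$; such $k_N$ exists whenever $\mu_N = o(N)$, for instance $k_N = \lceil\sqrt{N(\mu_N+1)}\rceil$. Markov's inequality applied to the pmf $\{Q^m_i\}_{i=0}^N$ gives $\sum_{i > k_N}Q^m_i \leq \mu_N/k_N \to 0$, hence $\sum_{i=1}^{k_N}Q^m_i \to 1 - Q^m_0$. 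Stochastic dominance of order statistics yields $L_N(i) \geq L_N(k_N)$ for every $i \leq k_N$, so
\begin{equation*}
    \sum_{i=1}^N Q^m_i L_N(i) \;\geq\; L_N(k_N)\sum_{i=1}^{k_N}Q^m_i.
\end{equation*}

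It thus remains to verify $L_N(k_N) \doteq \log\log N + \log a$, which is the main obstacle. The classical representation $X_{(k_N)} \stackrel{d}{=} \sum_{\ell=k_N}^N E_\ell/\ell$ with $E_\ell$ i.i.d.\ exponentials gives $\bbe[X_{(k_N)}] = H_N - H_{k_N-1}$ and $\mathrm{Var}(X_{(k_N)}) \leq \pi^2/6$. Chebyshev's inequality, combined with $\log(N/k_N) \to \infty$ (guaranteed by $k_N/N \to 0$), then yields $L_N(k_N) \geq (1-o(1))\,\log\!\bigl[1 + a(\log(N/k_N) - \omega_N)\bigr]$ for a suitable $\omega_N \to \infty$. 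The delicate final step is to verify $\log\log(N/k_N) \doteq \log\log N$: this is where the choice of $k_N$ must be tuned carefully against $\mu_N$, since if $\mu_N$ is permitted to approach $N$ very fast the slack for placing $k_N$ becomes tight; the strictness of (\ref{eq:Q1}) is what ensures the tuning is possible. Once this asymptotic is confirmed, combining the upper and lower bounds yields $\int_0^\infty \log(1+ax)\,dG^N(x) \doteq \log\log N + \log a$, as claimed.
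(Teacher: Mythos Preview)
Your route differs from the paper's. Rather than decomposing into order-statistic contributions and localizing the mass of $\{Q^m_i\}$ via Markov's inequality, the paper treats $G^N$ as a single distribution: it sets $X_N\sim G^N$, computes $\bbe[X_N]=\sum_i Q^m_i\,\mu_{(i)}$ (with $\mu_{(i)}$ the mean of the $i$-th largest of $N$ exponentials), bounds the variance of $X_N$, and then invokes Theorem~\ref{th:limit} to interchange $\bbe$ and $\log$, obtaining $\bbe[\log(1+aX_N)]\doteq\log(1+a\,\bbe[X_N])$. Your term-by-term sandwich is a legitimate alternative and is arguably more transparent about which indices carry the weight.

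That said, the ``delicate final step'' you flag is a genuine gap, and contrary to your closing sentence, hypothesis~(\ref{eq:Q1}) alone does \emph{not} close it. Put all mass on the single index $i^\star=\lfloor N/\log N\rfloor$ (with $Q^m_0=0$): then $\sum_i iQ^m_i/N\to 0$, yet $G^N=G^{(i^\star)}$, $\bbe[X_{(i^\star)}]=H_N-H_{i^\star-1}\sim\log\log N$, and hence $\int_0^\infty\log(1+ax)\,dG^N(x)\sim\log\log\log N$, not $\log\log N$. No tuning of $k_N$ rescues your argument here, since $\mu_N/k_N\to 0$ forces $k_N\gg N/\log N$, whence $N/k_N\ll\log N$ and $\log\log(N/k_N)=o(\log\log N)$. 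The paper's proof has the identical defect: to get $\bbe[X_N]\doteq\log N$ it tacitly needs $\log\!\bigl(\sum_i iQ^m_i\bigr)=o(\log N)$, i.e.\ $\sum_i iQ^m_i=N^{o(1)}$, which is strictly stronger than~(\ref{eq:Q1}). Both arguments are repaired under that stronger hypothesis; in particular, in the only place the lemma is invoked (Theorem~\ref{th:distributed}) one has $\sum_i iQ^m_i\leq(M+1)/2$ bounded, so both your approach and the paper's are valid there. A smaller point: your lower bound carries a residual factor $1-Q^m_0$, which the lemma as stated does not control either; again harmless in the application, but another place where the stated hypotheses are too loose.
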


\begin{proof}
See Appendix \ref{app:lemma:exp_scaling2}.
\end{proof}

By using the results of the Lemmas \ref{lemma:G2} and
\ref{lemma:exp_scaling2} we offer the main result of the
distributed algorithm in the following theorem

\begin{theorem}
\label{th:distributed}

The sum-rate throughput of the cognitive network by exploiting the
proposed distributed algorithm scales as
\begin{equation*}
    R_{\rm sum}\doteq M\log\log N.
\end{equation*}

\end{theorem}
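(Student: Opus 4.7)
The plan is to sandwich $R_{\rm sum}$ between two quantities that both scale as $M\log\log N$. The upper bound is inherited for free: the distributed scheme produces a random (ordered) assignment that is feasible for the maximization in (\ref{eq:Rmax}) (on channels where $\mathcal{H}_m=\emptyset$ one can pad with any user, which only increases the sum-rate), so $R_{\rm sum}\leq R_{\max}$, and Theorem~\ref{th:centralized} immediately gives $\limsup_{N\to\infty}R_{\rm sum}/(M\log\log N)\leq 1$. All the real work lies in the matching lower bound.

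For the lower bound I would start from $R_{\rm sum}=\sum_{m=1}^{M}R_m\geq\sum_{m=1}^{M}R_m^l$ and chain the left inequality in (\ref{eq:Rl_bounds2}) with (\ref{eq:lemma:G2_2}) of Lemma~\ref{lemma:G2} to obtain
\begin{equation*}
    R_m^l\;\geq\;\int_0^\infty\log(1+\rho\eta_{\min}x)\,dG^N(x)\;-\;\log\Big[1+\tfrac{K_mP_p}{P_s}\gamma_{\max}\rho\eta_{\min}\Big].
\end{equation*}
The subtracted term is a finite constant independent of $N$, so once Lemma~\ref{lemma:exp_scaling2} is invoked with $a=\rho\eta_{\min}$ I obtain $R_m^l\dotgt\log\log N$; summing over $m$ then yields $R_{\rm sum}\dotgt M\log\log N$, which together with the upper bound forces $R_{\rm sum}\doteq M\log\log N$.

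The main obstacle, and the only nontrivial step, is verifying the hypothesis (\ref{eq:Q1}) of Lemma~\ref{lemma:exp_scaling2} for the specific sequence $\{Q_i^m\}$ induced by the distributed algorithm. I would handle this in two moves. First, an algebraic reduction: substituting the definition (\ref{eq:q1}) and swapping the order of summation gives
\begin{equation*}
    \sum_{i=1}^{N}iQ_i^m\;=\;\sum_{n=1}^{N}\frac{n+1}{2}\,P\bigl(|\mathcal{H}_m|=n\bigr)\;\leq\;\tfrac{1}{2}\bigl(\bbe[|\mathcal{H}_m|]+1\bigr).
\end{equation*}
Second, a probabilistic bound on $\bbe[|\mathcal{H}_m|]$: by the very definition (\ref{eq:lambda}) of $\lambda(m,n)$, the event $n\in\mathcal{H}_m$ forces $\sinr_{m,n}\geq\lambda(m,n)$, which has probability exactly $1/N$; hence $\bbe[|\mathcal{H}_m|]\leq N\cdot(1/N)=1$ and therefore $\sum_{i=1}^{N}iQ_i^m\leq 1$. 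Dividing by $N$ gives zero in the limit, (\ref{eq:Q1}) is satisfied, and the lower-bound argument above closes the proof.
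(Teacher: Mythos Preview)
Your proof is correct and follows the same skeleton as the paper's: establish the lower bound through $R_m\geq R_m^l$, chain (\ref{eq:Rl_bounds2}) with Lemma~\ref{lemma:G2}, and verify the hypothesis (\ref{eq:Q1}) of Lemma~\ref{lemma:exp_scaling2} by reducing $\sum_i iQ_i^m$ to $\tfrac12(\bbe[|\mathcal{H}_m|]+1)$. There are two places where you take a slightly different (and cleaner) path. First, for the upper bound you simply invoke $R_{\rm sum}\leq R_{\max}$ and Theorem~\ref{th:centralized}; the paper instead appeals to the upper side of (\ref{eq:Rl_bounds2}) together with (\ref{eq:lemma:G2_1}), which strictly speaking bounds only $R_m^l$ and not $R_m$, so your route is actually the more rigorous one. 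Second, to control $\bbe[|\mathcal{H}_m|]$ the paper bounds the per-user candidacy probability $p(m,n)$ by $\omega(n)=1-(1-1/N)^M$ (the probability of being a candidate for \emph{some} band) and then shows $N\omega(n)\to M$, yielding $\sum_i iQ_i^m\leq\tfrac12(M+1)$ in the limit; you instead observe directly that $n\in\mathcal{H}_m$ forces $\sinr_{m,n}\geq\lambda(m,n)$, which by (\ref{eq:lambda}) has probability exactly $1/N$, giving the sharper bound $\bbe[|\mathcal{H}_m|]\leq 1$ for every $N$. Both bounds suffice for (\ref{eq:Q1}); yours avoids the L'H\^opital computation and the detour through the compound binomial mean.
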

\begin{proof}
We start by demonstrating that the set $\{Q^m_i\}$ as defined in
(\ref{eq:q1}) fulfils the condition (\ref{eq:Q1}) of Lemma
\ref{lemma:exp_scaling2}. From (\ref{eq:q1}) we have
\begin{align}\label{eq:Q2}
    \nonumber\sum_{i=1}^NiQ^m_i&=\sum_{i=1}^Ni\sum_{n=i}^N\frac{1}{n}P\Big(|\mathcal{H}_m|=n\Big)\\
    \nonumber &=
    \sum_{n=1}^N\frac{1}{n}P\Big(|\mathcal{H}_m|=n\Big)\sum_{i=1}^ni\\
    \nonumber &=
    \sum_{n=1}^N\frac{n+1}{2}P\Big(|\mathcal{H}_m|=n\Big)\\
    \nonumber&=
    \frac{1}{2}\sum_{n=1}^NnP\Big(|\mathcal{H}_m|=n\Big)\\
    &+
    \frac{1}{2}\underset{=1}{\underbrace{\sum_{n=0}^NP\Big(|\mathcal{H}_m|=n\Big)}}-\frac{1}{2}P\Big(|\mathcal{H}_m|=0\Big).
\end{align}
Note that $|\mathcal{H}_m|$ has \emph{compound} binomial distribution
with parameters $\{p(m,n)\}_{n=1}^N$ \cite{Johnson}, where $p(m,n)$
denotes the probability that the $m^{th}$ channel is allocated to the
$n^{th}$ user. Therefore, according to the properties of compound
binomial distributions we have \cite{Johnson}
\begin{equation}\label{eq:Q3}
    \sum_{n=1}^NnP\Big(|\mathcal{H}_m|=n\Big)=\bbe\left[|\mathcal{H}_m|\right]=\sum_{n=1}^Np(m,n).
\end{equation}
From (\ref{eq:Q2}) and (\ref{eq:Q3}) we get
\begin{equation*}
    \sum_{i=1}^NiQ^m_i\leq \frac{1}{2}\left(\sum_{n=1}^Np(m,n)+1\right).
\end{equation*}
On the other hand, the probability that any specific user $n$ can be a
candidate for taking over \emph{any} of the $M$ channels is
\begin{align*}
    \nonumber \omega(n)&\dff P\left(\max_m\left\{\frac{\sinr_{m,n}}{\lambda(m,n)}\right\}\geq1\right)\\
    &=
    1-\prod_{m=1}^{M}P\Big(\sinr_{m,n}\leq \lambda(m,n)\Big)\\
    \nonumber &=1-\prod_{m=1}^{M}\underset{=1-1/N}{\underbrace{T\Big(\lambda(m,n);m,n\Big)}}
    =1-\left(1-\frac{1}{N}\right)^{M}.
\end{align*}
Therefore, since $\sum_{m=1}^Mp(m,n)=\omega(n)$, for all $m,n$ we have
$p(m,n)\leq \omega(n)$. Hence,
\begin{equation*}
    \sum_{i=1}^NiQ^m_i\leq \frac{1}{2}\left(N\omega(n)+1\right).
\end{equation*}
On the other hand,
\begin{equation*}
    \lim_{N\rightarrow\infty}N\omega(n)=
    \lim_{N\rightarrow\infty}\frac{1-\left(1-\frac{1}{N}\right)^{M}}{\frac{1}{N}}=M.
\end{equation*}
Therefore, $\sum_{i=1}^NiQ^m_i\leq \frac{1}{2}(M+1)$ and the set
$\{Q^m_i\}$ satisfies the condition in Lemma \ref{lemma:exp_scaling2}.
Therefore, Lemmas \ref{lemma:G2} and \ref{lemma:exp_scaling2} together
establish the following
\begin{align*}
  \int_0^\infty\log(1+x)\;dF_u^N(x;m) &\dotlt \log\log
  N+\log(\rho\eta_{\max}),
  \\
  \nonumber\mbox{and}\;\;\int_0^\infty\log(1+x)\;dF^N_l(x;m) &\dotgt
  \log\log N\\
  - \log\bigg[\frac{1}{\rho\eta_{\min}}&+\frac{K_mP_p}{P_s}\gamma_{\max}\bigg].
\end{align*}
The two inequalities above, in conjunction with (\ref{eq:Rl_bounds2}) and noting that $R_{\rm sum}=\sum_{m=1}^{M}R_m$ provide
\begin{align*}
    \nonumber M\log\log
    N-&M\log\bigg[\frac{1}{\rho\eta_{\min}}+\frac{K_mP_p}{P_s}\gamma_{\max}\bigg]\\
    \nonumber&\dotlt \; R_{\rm sum}\\
    &\dotlt\;  M\log\log N+M\log(\rho\eta_{\max}),
\end{align*}
which concludes the desired result.
\end{proof}

\subsection{Simulation Results}
\begin{figure}[t]
  \centering
  \includegraphics[width=3.7 in]{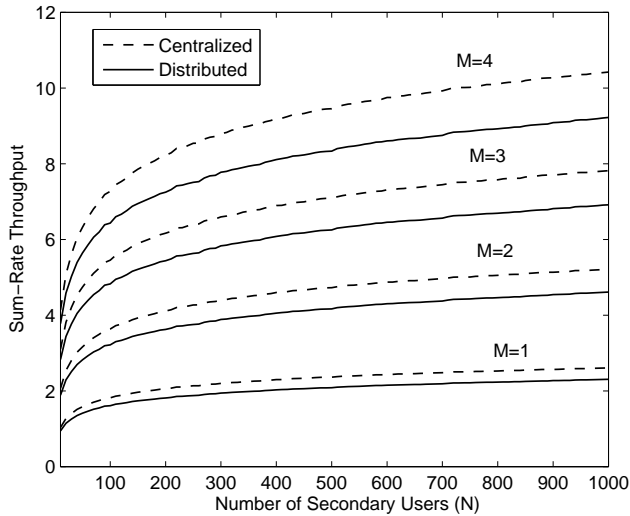}\\
  \caption{Sum-rate throughput versus the number of secondary users for
  $M=1,\dots, 4$, $\rho=10$ dB, and the number of the primary users
  $K_m=4$.}\label{fig:sumrate}
\end{figure}

The simulation results in Fig.~\ref{fig:sumrate} demonstrates the
sum-rate throughput achieved under the centralized setup given in
(\ref{eq:Rmax}) and the distributed setup given in Section
\ref{sec:algorithm}. We consider a primary network consisting of 4
users and look at the throughput scaling for the cases that there exist
$M=1,\dots,4$ available spectrum bands to be utilized by the secondary
users. We set all path-loss terms $\{\eta_i\}$ and $\{\gamma_{i,j}\}$
equal to 1 and find the sum-rate throughput as the number of secondary
users increases. As shown in Fig.~\ref{fig:sumrate}, as the number of
secondary users increases, the sum-rate throughput achieved by the
centralized and distributed schemes exhibit the same scaling factor.
Note that what Theorems 1 and 2 convey is that the ratio of $R_{\max}$ and $R_{\rm sum}$ in the centralized and distributed setups, respectively, approaches to 1 as $N\rightarrow\infty$, i.e.,
\begin{equation*}
   \lim_{N\rightarrow\infty}\frac{R_{\max}}{R_{\rm sum}}=1
\end{equation*}
which does not necessarily mean that $R_{\max}$ and $R_{\rm sum}$ have to coincide. As a matter of fact, as observed in Fig. 1, there is a gap between $R_{\max}$ and $R_{\rm sum}$, that according to the results of Theorems 1 and 2 must be diminishing with respect to $R_{\max}$ and $R_{\rm sum}$ such that we obtain the asymptotic equality $R_{\max}\;\doteq\;R_{\rm sum}\;\doteq\; M\log\log N$. This gap accounts for the cost incurred for enabling distributed processing in the distributed spectrum access algorithm.

The throughput achieved under the distributed setup uniformly is less
than that of the centralized setup. This is justified by recalling that
the centralized scheme finds the best secondary user for each available
spectrum band, whereas the distributed network finds all the secondary
users whose quality of communication on a specific channel satisfies a
constraint ($\lambda(m,n)$) and among all such secondary user one is
randomly selected to access the spectrum band. This, not necessarily
guarantees finding the best user for each available spectrum band and as a result leads to some degradation in the sum-rate throughput.

\begin{figure}[t]
  \centering
  \includegraphics[width=3.7 in]{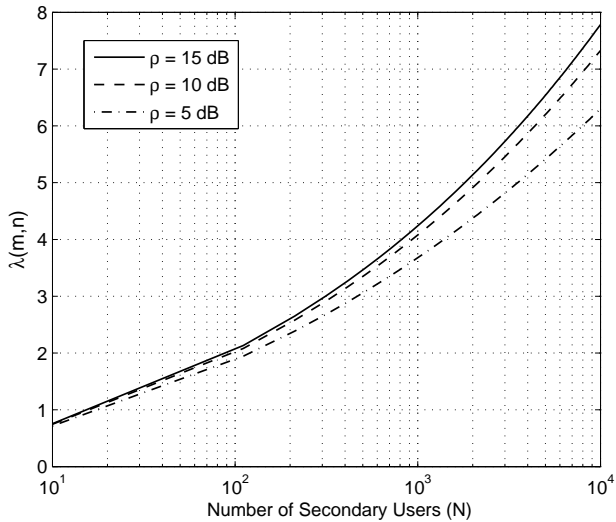}\\
  \caption{$\lambda(m,n)$ versus the number of secondary users for $M=4$
  available spectrum bands and $K_m=4$ primary users.}\label{fig:threshold1}
\end{figure}

Finding the metric $\lambda(m,n)$ as defined in (\ref{eq:lambda}) is
the heart of the distributed spectrum allocation algorithm. As it is
not mathematically tractable to formulate the CDF of $\sinr_{m,n}$
(note that $F_l(x;m)$ and $F_u(x;m)$ are only the CDFs of the lower and
upper bounds on $\sinr_{m,n}$), we are not able to find a closed form
expression for $\lambda(m,n)$. However, by solving
(\ref{eq:lambda}) numerically, we provide the following two figures which are helpful in shedding light on how $\lambda(m,n)$ varies with other network parameter, i.e., primary and cognitive network sizes as well as the number of available spectrum bands.

Figure \ref{fig:threshold1} demonstrates the dependence of
$\lambda(m,n)$ on the transmission $\snr$ denoted by $\rho$. It is seen
that $\lambda(m,n)$ monotonically increases with $\rho$. Intuitively, as $\rho$ increases, the users are expected to have more reliable communication and as a result the algorithm will impose more stringent conditions on the secondary users for considering themselves as a candidate for accessing any specific spectrum band. More stringent conditions will translate to having higher values of $\lambda(m,n)$ such that the condition in (\ref{eq:lambda}) is satisfied.

The numerical evaluations provided in Fig. \ref{fig:threshold2} show that $\lambda(m,n)$ increases as the size of the primary network decreases. Again as in Fig. \ref{fig:threshold1}, this is justified by noting that smaller number of primary users leads to less interference from the primary network to the cognitive network and thereof, more reliable secondary links. Thus, decreasing the primary network size again requires more stringent conditions to be satisfied for a secondary user to be deemed as a candidate for taking over a spectrum band, which in turn results in an increase in $\lambda(m,n)$. It is noteworthy that the choices of the thresholds given in (\ref{eq:lambda}) have been {\em heuristic} choices that satisfy all the desired properties (optimal scaling as well as fairness and limited information exchange as discussed in Section \ref{sec:properties}). Nevertheless, we cannot prove that these are the only choices of the thresholds and it {\em might} be possible to find some other threshold settings that satisfy all these conditions and yet do not depend on the size of the primary network. Hence, while the scaling of the sum-rate throughput does not depend on the size of the primary network, the choices of the thresholds for achieving this scaling in the distributed algorithm do depend on the size of the primary network.

\begin{figure}[t]
  \centering
  \includegraphics[width=3.7 in]{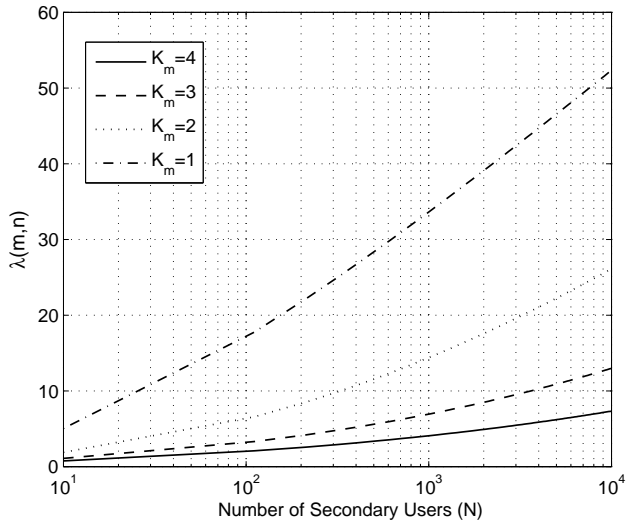}\\
  \caption{$\lambda(m,n)$ versus the number of secondary users for different
  sizes for the primary network $K_m=1,\dots, 4$ and $\rho=10$ dB.}\label{fig:threshold2}
\end{figure}

\section{Information Exchange and Fairness}
\label{sec:properties}

\subsection{Information Exchange}
\label{sec:information}

In the distributed algorithm we assume that the $n^{th}$ secondary
\emph{receiver} measures $\{\sinr_{m,n}\}_{m=1}^{M}$ corresponding to
different spectrum bands, selects the largest one, whose index is
denoted by $m_n^\dag$, and compares it against a pre-determined quality metric $\lambda(m,n)$. If $\sinr_{m^\dag_n,n}\geq \lambda(m,n)$, this
secondary receiver should notify its designated secondary transmitter
to participate in a contention-based competition for taking over
channel $m^\dag_n$. Such notification requires transmitting $\log M$
\emph{information bits} from the secondary receiver to its respective
secondary transmitter.

Although not all of the secondary pairs will be involved in such
information exchange, it is imperative to analyze the aggregate amount
of such information for large networks ($N\rightarrow\infty$). In the
following theorem we demonstrate that for the choice of $\lambda(m,n)$
provided in (\ref{eq:lambda}), the asymptotic average amount of information exchange is a constant independent of $N$ and therefore does not harm the sum-rate throughput of the cognitive network.

\begin{theorem}
\label{th:information}

In the cognitive network with distributed spectrum access, when
$\lambda(m,n)$ satisfies
\begin{equation*}
    T\Big(\lambda(m,n);m,n\Big)=1-\frac{1}{N},
\end{equation*}
the average aggregate amount of information exchange between secondary
transmitter-receiver pairs is asymptotically equal to $M\log M$.
\end{theorem}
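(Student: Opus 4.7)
The plan is to compute the expected number of secondary receivers that decide to transmit the $\log M$-bit notification, and then multiply by $\log M$. Since each participating receiver contributes exactly $\log M$ bits of signaling to its designated transmitter, the problem reduces to counting participants in expectation.

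First I would compute the probability $\omega(n)$ that the $n^{th}$ secondary pair is a participant, i.e.\ that $\sinr_{m_n^\dag,n}\geq\lambda(m_n^\dag,n)$. Since $m_n^\dag=\arg\max_m\{\sinr_{m,n}/\lambda(m,n)\}$, the event that $n$ participates is exactly the event that at least one of the $M$ normalized $\sinr$s is at least $1$. By independence of the $\sinr_{m,n}$ across $m$ (the channels $g_n^m$ and $\{h_{n,j}^m\}_j$ are independent across $m$), and by the defining relation $T(\lambda(m,n);m,n)=1-1/N$, we have
\begin{equation*}
    \omega(n)=1-\prod_{m=1}^M P\bigl(\sinr_{m,n}<\lambda(m,n)\bigr)=1-\left(1-\frac{1}{N}\right)^M,
\end{equation*}
which, crucially, is independent of $n$. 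This same computation already appeared in the proof of Theorem~\ref{th:distributed}.

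Next, by linearity of expectation applied to the indicator variables of the participation events, the expected number of participants is $N\omega(n)=N[1-(1-1/N)^M]$, and therefore the expected aggregate information exchange $I_N$ equals
\begin{equation*}
    I_N=\log M\cdot N\left[1-\left(1-\frac{1}{N}\right)^M\right].
\end{equation*}
Finally I would evaluate $\lim_{N\to\infty}N[1-(1-1/N)^M]=M$, which follows from the binomial expansion $(1-1/N)^M=1-M/N+O(1/N^2)$ (or equivalently from L'Hopital on $(1-(1-1/N)^M)/(1/N)$), exactly as used at the end of the proof of Theorem~\ref{th:distributed}. Consequently $\lim_{N\to\infty}I_N/(M\log M)=1$, i.e.\ $I_N\doteq M\log M$.

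There is no real obstacle here; the proof is essentially bookkeeping, leveraging the independence of $\{\sinr_{m,n}\}_m$ across the bands and the defining property of the threshold $\lambda(m,n)$. The only point that deserves explicit mention is that $\omega(n)$ does not depend on $n$, which is what allows the aggregate count to be written as the simple product $N\omega(n)$ rather than a sum with $n$-dependent terms.
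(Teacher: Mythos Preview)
Your proposal is correct and follows essentially the same approach as the paper's proof: compute $\omega(n)=1-(1-1/N)^M$, multiply by $N\log M$, and take the limit (the paper uses L'Hopital, you mention both that and the binomial expansion). Your write-up is in fact a bit more careful than the paper's in explicitly noting why $\omega(n)$ is independent of $n$ and in invoking linearity of expectation.
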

\begin{proof}
As stated earlier, the probability that the a user satisfies the
$\lambda(m,n)$ constraint is $\omega(n)=1-(1-1/N)^{M}$. Therefore, the
average aggregate amount of information exchange, denoted by $R_{\rm
ie}$, is
\begin{align}
  \nonumber R_{\rm ie} &= \lim_{N\rightarrow\infty}N\omega(n)\log M\\
  \label{eq:IE1} &=\log M  \lim_{N\rightarrow\infty}\frac{1-(1-\frac{1}{N})^{M}}{\frac{1}N}\\
\label{eq:IE2}&= \log M
\lim_{N\rightarrow\infty}\frac{-\frac{1}{N^2}M(1-\frac{1}{N})^{M-1}}{-\frac{1}{N^2}}=
M\log M,
\end{align}
where for the transition from (\ref{eq:IE1}) to (\ref{eq:IE2}) we have
used the L'Hopital's rule.
\end{proof}

\subsection{Fairness} \label{sec:fairnes}

In general, opportunistic user selections might lead to the situation that the network be dominated by the secondary pairs with
their receiver far from the primary users so that see less amount of
interference from them, or by those pairs where the transmitter and the
receiver are closely located and enjoy a good communication channel.

In despite of these facts, we show that in our network, by appropriately choosing $\{\lambda_{m,n}\}$ we can provide equiprobable opportunity for all users to access the available spectrum bands. This can be made possible by enforcing more stringent conditions (higher $\lambda(m,n)$) for the users benefitting from smaller path-loss and shadowing effects. In the following theorem we show that by the choice of $\lambda(m,n)$ provided in (\ref{eq:lambda}) all the users have the equal opportunities for accessing a channel.

\begin{theorem}
\label{th:fairness} In the cognitive network with distributed spectrum
allocation, when $\lambda(m,n)$ satisfies
\begin{equation*}
    T\Big(\lambda(m,n);m,n\Big)=1-\frac{1}{N},
\end{equation*}
all the users have the same probability for being a allocated a
channel.
\end{theorem}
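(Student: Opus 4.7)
The plan is to trace the fairness property back to the normalization in (\ref{eq:lambda}), which is designed precisely so that the tail event $\{\sinr_{m,n}\geq\lambda(m,n)\}$ has probability $1/N$ for every pair $(m,n)$.

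First, by the probability integral transform, $T(\lambda(m,n);m,n)=1-\tfrac{1}{N}$ is equivalent to $P(\sinr_{m,n}\geq\lambda(m,n))=\tfrac{1}{N}$, a value identical across all users and all bands. Since $\sinr_{m,n}$ for different $m$ depend on the independent channel coefficients $\{g_n^m\}$ and $\{h_{n,j}^m\}$ on the non-overlapping bands, the $M$ threshold events are mutually independent. Consequently, the probability that user $n$ clears the threshold on at least one band, and hence becomes a candidate for some channel, equals
\begin{equation*}
\omega(n)\;=\;1-\Big(1-\tfrac{1}{N}\Big)^{\!M},
\end{equation*}
which is manifestly independent of $n$. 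Thus every secondary user enters the contention with the same probability.

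Next, conditioned on becoming a candidate, user $n$ lands in $\mathcal{H}_{m^\dag_n}$ and a uniformly random member of that set wins the band. I would write
\begin{equation*}
P(A_n)\;=\;\sum_{m=1}^{M}P(n\in\mathcal{H}_m)\,\bbe\!\left[\tfrac{1}{|\mathcal{H}_m|}\,\bigg|\, n\in\mathcal{H}_m\right],
\end{equation*}
and show that both factors are asymptotically $n$-independent. Each $P(n\in\mathcal{H}_m)$ splits into a dominant ``no-collision'' term $\tfrac{1}{N}(1-\tfrac{1}{N})^{M-1}$, corresponding to the event that only channel $m$ clears the threshold for user $n$ and is common to all $n$, plus an $O(1/N^2)$ correction from multi-channel threshold crossings. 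Given $n\in\mathcal{H}_m$, the other members of $\mathcal{H}_m$ are determined by the remaining $N-1$ users' independent realizations whose marginal candidacy probabilities share the same leading $\tfrac{1}{N}$ value; hence $|\mathcal{H}_m|\,|\,n\in\mathcal{H}_m$ has a common limiting (compound-binomial, asymptotically Poisson) distribution regardless of $n$. Combining yields $P(A_n)\doteq P(A_{n'})$ for every pair of users.

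The main obstacle is that (\ref{eq:lambda}) only calibrates the marginal tail probability to $1/N$; it does not pin down the full shape of the CDF $T(\cdot;m,n)$, which still varies with $n$ through $\eta_n$ and $\{\gamma_{n,j}\}_j$. As a result, the way $\omega(n)$ partitions across the $M$ channels (i.e.\ the $\arg\max$ of $\sinr_{m,n}/\lambda(m,n)$ over $m$) is genuinely user-dependent at finite $N$. The delicate step is to verify that this user-specific skewness is confined to the $O(1/N^2)$ collision residual, so that it is dominated by the leading $\tfrac{M}{N}$ scale of $P(A_n)$ and washes out in the ratio $P(A_n)/P(A_{n'})\to 1$ asymptotically.
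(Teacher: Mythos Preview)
Your first paragraph already contains the paper's entire proof: the paper simply observes that the probability a user becomes a candidate for \emph{some} band is $\omega(n)=1-(1-1/N)^M$, which is independent of $n$, and stops there. In other words, the paper reads ``being allocated a channel'' as ``having the opportunity to access a channel'' (being a candidate), not as ``winning the random contention within $\mathcal{H}_m$.''

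Everything from your second paragraph onward goes beyond what the paper proves. You are attempting the strictly stronger statement that the \emph{actual} allocation probability $P(A_n)$ is the same for all users, and you correctly identify the obstacle: the choice (\ref{eq:lambda}) only equalizes the marginal tail mass $1/N$ on each band, not the full shape of $T(\cdot;m,n)$, so the distribution of $m_n^\dag$ conditional on candidacy is genuinely user-dependent at finite $N$. Your proposed resolution (that the discrepancy lives in an $O(1/N^2)$ collision term and hence $P(A_n)\doteq P(A_{n'})$) is plausible but would need care to make rigorous, since the size $|\mathcal{H}_m|$ conditional on $n\in\mathcal{H}_m$ also inherits user-dependence through the other users' non-identical $\arg\max$ laws. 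The paper sidesteps all of this by adopting the weaker interpretation; if you want only what the paper claims, you can delete everything after your computation of $\omega(n)$.
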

\begin{proof}
As shown earlier, the probability that user $n$ satisfies the $\sinr$
constraint $\lambda(m,n)$ is $\omega(n)=1-(1-1/N)^{M}$, which is the same for all users.
\end{proof}

\section{Discussions}
\label{sec:discussions}

\subsection{Impact on the Primary Network}
In cognitive networks with {\em underlaid} spectrum access,
the secondary and primary users may coexist simultaneously. Therefore, in order to protect the primary users, the secondary users must adjust their transmission power such that they operate within the tolerable noise level of the primary users and thereof do not harm the communication of the primary users. Hence, it is imperative to investigate whether such power adjustments affect the achievable throughput scaling in the centralized and distributed setups.

According to theorems \ref{th:centralized} and \ref{th:distributed}, the sum-rate throughput of the cognitive network scales as $M\log\log N$ which does not depend on the $\snr$ or the transmission power of the secondary users. Therefore, irrespective of the transmission policy and power control mechanism (i.e., for any arbitrary $\snr$ or transmission power), the secondary users achieve the scaling law of $M\log\log N$. Hence, deploying any power management mechanism of interest along with the proposed spectrum access algorithms, does not harm the optimal scaling.

\subsection{Distributed Algorithm}
For implementing the distributed spectrum access protocol there are two major steps involved. First the random selection of a user out of the candidates for taking over a specific spectrum band. For randomly selecting a user out of the set of users in ${\cal H}_m$ to access the $m^{th}$ spectrum band, one distributed approach is to equip all the users with a backoff timers. Then when a user learns that it is a candidate for accessing the $m^{th}$ spectrum band with run the backoff timer with an initial random value. The first cognitive pair whose backoff timer goes off will take over the channel and with a beacon message can notify it to the rest of the network.

Secondly, the distributed algorithm requires some secondary receivers transmit $\log M$ information bits to their respective transmitters. Transmitting $\log M$ information bits requires a very low rate communication. An appropriate approach for such communication rate is to deploy ultra-wide band (UWB) communication between a secondary transmitter and receiver pair. This will allow the secondary users to communicate the low-rate information bits well below the noise level of the primary users. It is noteworthy that the cognitive radios are often assumed to be equipped with wideband filters which enable them to transmit and receive in a wide range of frequency spectrum. This feature of the cognitive radios provides an appropriate context for implementing UWB communication.

\subsection{Networks of Limited Size}
Practical networks do not have large enough number of users to fully capture the multiuser diversity gain (double-logarithmic growth of capacity with the number of users). Therefore, due to such degradation in multiuser diversity gain, the network cannot support the throughput expected in theory. Therefore, for practical networks we have only {\em upper bounds} on the actual sum-rate throughputs. Knowing such upper bounds help to find in an insight about what to expect from the cognitive networks at the design state.

Although the result hold analytically for only $N\rightarrow\infty$, from the simulation results in Fig. 1 we observe that as low as $N=50$ (which is around the point that we start to observe steady increase in the throughput) secondary users are enough to start observing the multiuser diversity gain. This is not far from the size of practical networks.
\section{Conclusions}
\label{sec:conclusion}

In this paper we investigated the multiuser diversity gain in cognitive
networks. We first obtained the optimal gain achieved in a network with
a central authority and show that the gain achieved in such cognitive
networks is similar to that of the interference-free networks, i.e.,
the network throughput scales double-logarithmically with the number of users. Then we proposed a distributed spectrum access scheme which is
proven to achieve the optimal throughput scaling factor. This scheme
imposes the exchange of $\log M$ information bits per
transmitter-receiver cognitive pair for some pairs, and no information
exchange for the others. The other specification of the distributed
algorithm are that the network-wide average aggregate amount of
information bits it requires is asymptotically equal to $M\log M$, and
it ensures fairness among the secondary users.

\appendix

\section{Proof of Lemma \ref{lemma:D}} \label{app:lemma:D}

We equivalently show that $\lim_{N\rightarrow\infty}P(\mathcal{D})=1$. An intuitive justification is that if we put the $\sinr$s in an $M\times N$ array, and locate the maximum element of each row, event ${\cal D}$ occurs when no two such maximum are located in the same column. Therefore, as the number of the columns increases, in the asymptote of very large values of $N$, the even $\cal D$ must occur with probability 1.

For set $C\subseteq\mathcal{M}\dff\{1,\dots,M\}$ such that $|C|\geq 2$ we define $P(C)$ as the probability that the spectrum bands with indices in
$C$ have the same most favorable user. Therefore, we get
\begin{eqnarray}
    \nonumber P(C)&\dff& P\left\{\forall \;m,m'\in
    C,\;n^*_{m}=n^*_{m'}\right\}\\
    \nonumber &=&\sum_{n=1}^NP\left\{\forall \;m\in
    C,\;n^*_{m}=n\;\Big|\; n^*_m=n\right\}\\
    \nonumber && \hspace{.5in}\times{P(n^*_m=n)}\\
    \nonumber &=&\sum_{n=1}^N\prod_{m\in C}P\Big(\sinr_{m,n}\geq\sinr_{m,n'},\;\forall n'\neq
    n\Big)\\
    \label{eq:lemma:D_proof1} && \hspace{.5in}\times{P(n^*_m=n)}\\
    \nonumber &=&\sum_{n=1}^N\prod_{m\in
    C}\prod_{n\neq n'}\underset{\dff\;q(m,n,n')}{\underbrace{P\Big(\sinr_{m,n}\geq\sinr_{m,n'}\Big)}}\\
    \nonumber&&\hspace{.5in}\times{P(n^*_m=n)}\\
    \label{eq:lemma:D_proof2}
    &\leq& N(q_{\max})^{|C|(N-1)},
\end{eqnarray}
where $q_{\max}=\max_{m,n,n'} q(m,n,n')$ and (\ref{eq:lemma:D_proof1})
and (\ref{eq:lemma:D_proof2}) hold due to the statistical independence of
the elements in $\{\sinr_{m,n}\}$, and . Therefore we have
\begin{eqnarray*}
  P(\mathcal{D}) &=& 1-\sum_{C\subseteq\mathcal{M}, |C|\geq 2}P(C) \\
  &=&1-\sum_{m=2}^{M}\sum_{C\subseteq\mathcal{M},\;{|C|=m}}P(C)\\
  & \geq &1-\sum_{m=2}^{M}\underset{\rightarrow 0\;\mbox{as}\;N\rightarrow\infty}{\underbrace{{M\choose m
  }N(q_{\max})^{m(N-1)}}}\\
  &\doteq&1,
\end{eqnarray*}
which completes the proof. Note that $q_{\max}$ is a function of $\{\eta_i\}$ and $\{\lambda_{i,j}\}$ and does not depend on $N$.

\section{Proof of Lemma \ref{lemma:order}} \label{app:lemma:order}

We first show that for any $i=1,\dots, N$,
$\mathcal{S}_l^{(i)}(m)\leq \sinr^{(i)}_m$. For $i=1$ we have
\begin{equation*}
    \sinr^{(1)}_m=\max_n\sinr_{m,n}\geq\max_nS_l(m,n)=\mathcal{S}_l^{(1)}(m).
\end{equation*}
For any $i=2,\dots,N$, from the definition of $\sinr^{(i)}_m$ and
$\mathcal{S}^{(i)}_m$ it can be deduced that each of the $(N-i+1)$
terms $\sinr^{(i)}_m, \dots, \sinr^{(N)}_m$ is greater than one
corresponding element in the set $\mathcal{S}_l(m)$. Therefore, there
cannot be more than $(i-1)$ elements in $\mathcal{S}_l(m)$ which are
all greater than $\sinr^{(i)}_m, \dots, \sinr^{(N)}_m$.

Now, if $\mathcal{S}_l^{(i)}(m)> \mathcal\sinr^{(i)}_m$, then all the
$i$ terms $\mathcal{S}_l^{(1)}(m)$, $\mathcal{S}_l^{(2)}(m), \dots,
\mathcal{S}_l^{(i)}(m)$ should be greater than all the $(N-i+1)$ terms
$\sinr^{(i)}_m,\dots, \sinr^{(N)}_m$. Therefore, we have found $i$
elements in $\mathcal{S}_l(m)$ that are all greater than
$\sinr^{(i)}_m,\dots, \sinr^{(N)}_m$ and this contradicts with what we
found earlier. Hence, we should have $\mathcal{S}_l^{(i)}(m)\leq
\mathcal\sinr^{(i)}_m$.

By following the same lines, we can show that also for $i=1,\dots, N$,
we always have $\sinr_m^{(i)}\leq\mathcal{S}_u^{(i)}(m)$, which
concludes the proof of the lemma.

\section{Proof of Lemma \ref{lemma:CDF}} \label{app:lemma:CDF}

Let $Y\dff|g^m_n|^2$, which has exponential distribution with unit
variance. Also define $Z\dff\sum_{j=1}^{K_m}|h^m_{n,j}|^2$ which is the
the summation of $K_m$ independent exponentially distributed random variables each with unit variance, and thereof has a ${\rm Gamma}(K_m,1)$ distribution. By denoting the probability density functions (PDF) of $Z$ and $Y$ by
\begin{eqnarray*}
  f_Y(y) &=& e^{-z}, \\
  \mbox{and}\;\;\;f_Z(z) &=&  \frac{z^{K_m-1}\;e^{-z}}{(K_m-1)!},
\end{eqnarray*}
the PDF of $S_l(m,n)=\frac{Y}{1/\rho\eta_{\min}+P_p/P_s\gamma_{\max}Z}$, denoted by $f_S(x)$ is
\begin{align*}
  f_S(x) &= \int_0^{\infty}f_{S\med Z}(x\med z)f_Z(z)dz \\
  &= \int_0^{\infty}\left(\frac{1}{\rho\eta_{\min}}+
  \frac{P_p\gamma_{\max}}{P_s}z\right)e^{\left(-\frac{x}{\rho\eta_{\min}}-
  \frac{P_p\gamma_{\max}}{P_s}zx\right)}dz\\
  &\hspace{1in}\times\frac{z^{K_m-1}\;e^{-z}}{(K_m-1)!}\;dz\\
  &=\frac{e^{-x/\rho\eta_{\min}}}{\left(\frac{P_p}{P_s}\gamma_{\max}x+1\right)^{K_m+1}}\\
  &\times
\left[\frac{\frac{P_p}{P_s}\gamma_{\max}x+1}{\rho\eta_{\min}}+K_m\frac{P_p}{P_s}\gamma_{\max}\right],
\end{align*}
where the last step holds as $\int_0^\infty e^{-u}u^M=M!$. Therefore The CDF is
\begin{equation*}
    F_l(x;m)=
    1-\frac{e^{-x/\rho\eta_{\min}}}{\left(\frac{P_p}{P_s}\gamma_{\max}x+1\right)^{K_m}}.
\end{equation*}
$F_u(x;m)$ can be found by following the same lines.

\section{Proof of Lemma \ref{lemma:exp_scaling}}
\label{app:lemma:exp_scaling}

We start by citing the following theorem.
\begin{theorem}\label{th:limit}
\emph{\cite[Theorem 4]{Sanayei:WCOM07}} Let $\{X_n\}_{n=1}^N$ be a family
of positive random variables with finite mean $\mu_N$ and variance
$\sigma^2_N$, also $\mu_N\rightarrow\infty$ and
$\frac{\sigma_N}{\mu_N}\rightarrow 0$ as $N\rightarrow\infty$. Then, for
all $\alpha>0$ we have
\begin{equation*}
    \bbe\Big[\log(1+\alpha X_N)\Big]\doteq
    \log\Big(1+\alpha\bbe[X_N]\Big).
\end{equation*}
\end{theorem}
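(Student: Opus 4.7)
The plan is to establish the asymptotic equality by sandwiching $\bbe[\log(1+\alpha X_N)]$ between matching upper and lower bounds, both of which are $(1+o(1))\log(1+\alpha\mu_N)$. The upper bound is immediate from Jensen's inequality applied to the concave function $x\mapsto\log(1+\alpha x)$:
\begin{equation*}
\bbe\big[\log(1+\alpha X_N)\big] \;\leq\; \log\big(1+\alpha\bbe[X_N]\big) \;=\; \log(1+\alpha\mu_N).
\end{equation*}
So the real work is in producing a matching asymptotic lower bound.

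For the lower bound, the strategy is to exploit the hypothesis $\sigma_N/\mu_N\to 0$, which is exactly a relative-concentration statement: Chebyshev's inequality gives, for any fixed $\epsilon\in(0,1)$,
\begin{equation*}
P\big(|X_N-\mu_N|\geq \epsilon\mu_N\big) \;\leq\; \frac{\sigma_N^2}{\epsilon^2\mu_N^2} \;\xrightarrow[N\to\infty]{}\; 0.
\end{equation*}
Let $A_N=\{X_N\geq (1-\epsilon)\mu_N\}$, so $P(A_N)\to 1$. Because $X_N\geq 0$ forces $\log(1+\alpha X_N)\geq 0$, I may discard the contribution from $A_N^c$ in a lower bound and obtain
\begin{equation*}
\bbe\big[\log(1+\alpha X_N)\big] \;\geq\; \bbe\big[\log(1+\alpha X_N)\,\mathbf{1}_{A_N}\big] \;\geq\; \log\big(1+\alpha(1-\epsilon)\mu_N\big)\,P(A_N).
\end{equation*}

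To finish, I divide through by $\log(1+\alpha\mu_N)$, which diverges since $\mu_N\to\infty$, and take the limit. Both $\log(1+\alpha(1-\epsilon)\mu_N)$ and $\log(1+\alpha\mu_N)$ behave like $\log\mu_N$, so their ratio tends to $1$; combined with $P(A_N)\to 1$ this yields
\begin{equation*}
\liminf_{N\to\infty}\frac{\bbe[\log(1+\alpha X_N)]}{\log(1+\alpha\mu_N)} \;\geq\; 1,
\end{equation*}
which together with the Jensen upper bound establishes $\bbe[\log(1+\alpha X_N)]\doteq \log(1+\alpha\mu_N)$.

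The argument has no single hard step; rather, the delicate point is making sure each ingredient is used correctly. The positivity of $X_N$ is essential, as it lets me throw away the $A_N^c$ piece without introducing an unsigned correction that could corrupt the asymptotic. The condition $\mu_N\to\infty$ is what turns the multiplicative perturbation $(1-\epsilon)$ inside the logarithm into a vanishing additive perturbation on the $\log\mu_N$ scale; without it, $\epsilon$ would leave a residual $\log(1-\epsilon)$ gap that could not be closed by sending $\epsilon\to 0$ after the fact. Finally, $\sigma_N/\mu_N\to 0$ is precisely the minimal concentration required to make the Chebyshev step succeed at the right scale; notably, no distributional assumption on $X_N$ beyond two moments is needed.
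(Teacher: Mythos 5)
Your proof is correct. Note, however, that the paper does not prove this statement at all: it is quoted verbatim as Theorem~4 of the cited reference [Sanayei:WCOM07] and used as a black box in Appendices~D and~G, so there is no in-paper argument to compare against. Your self-contained derivation --- Jensen's inequality for the upper bound, and Chebyshev concentration at scale $\epsilon\mu_N$ combined with the positivity of $X_N$ (which makes $\log(1+\alpha X_N)\geq 0$, so the event $A_N^c$ can be discarded in a lower bound) --- is sound; in particular you correctly observe that for a \emph{fixed} $\epsilon$ the factor $(1-\epsilon)$ contributes only an additive constant inside the logarithm, which is washed out by $\log\mu_N\rightarrow\infty$, so no limit in $\epsilon$ is ever needed. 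This is exactly the role the hypotheses $\mu_N\rightarrow\infty$ and $\sigma_N/\mu_N\rightarrow 0$ play, and your proof uses each of them where it is genuinely required.
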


Consider the set of random variables $\{Y_1,\dots,Y_N\}$ where $Y_i\sim
G(y)$ and define $X_i\dff Y^{(N-i+1)}$, where $Y^{(i)}$ is the $i^{th}$
order statistic of the set $\{Y_1,\dots,Y_N\}$; hence, $X_N\sim
G^{(1)}(x)$. Therefore, as provided in \cite[Sec. 4.6]{Arnold:Book}
\begin{eqnarray*}
    \mu_N &\dff& \bbe[X_N]=\sum_{n=1}^N\frac{1}{n}\\
    \sigma^2_N &\dff& \bbe[|X_N-\mu_N|^2]=\sum_{n=1}^N\frac{1}{n^2},
\end{eqnarray*}
which confirms that for finite $N$, $\mu_N$ is also finite. Also as shown in \cite{Sanayei:IT07}, $\mu_N\doteq\log N$ and
$\sigma^2_N\doteq\frac{\pi^2}{6}$, from which it is concluded that
$\frac{\sigma_N}{\mu_N}\rightarrow 0$ as $N\rightarrow\infty$. Therefore
the conditions of the above theorem are satisfied and we have
\begin{align*}
    \int_0^{\infty}\log(1+\alpha x)G^{(1)}dx&=\bbe\Big[\log(1+\alpha
    X_N)\Big]\\
    &\doteq\log\Big(1+\alpha\bbe[X_N]\Big)\\
    &=\log(1+\alpha\log N)\\
    &\doteq\log\log N+\log a,
\end{align*}
which is the desired result.

\section{}\label{app:lower}
By using the following lemma, we further find a lower bound on $R_m$
which will be more mathematically tractable.
\begin{lemma}
\label{lemma:condition}

For a continuous random variable $X$, increasing function $g(\cdot)$ and
real values $b\geq a$
\begin{equation*}
    \bbe\Big[g(X) \med X\geq b \Big]\geq \bbe\Big[g(X) \med X\geq
    a\Big].
\end{equation*}
\end{lemma}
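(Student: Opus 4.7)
The plan is to prove this by splitting the conditioning event $\{X \geq a\}$ into the two disjoint pieces $\{a \leq X < b\}$ and $\{X \geq b\}$ and invoking the law of total expectation. The key observation that does all the work is that since $g$ is increasing, any value of $X$ in $[a,b)$ gives $g(X) \leq g(b)$, while any value of $X$ in $[b,\infty)$ gives $g(X) \geq g(b)$. So the conditional mean over the lower piece is bounded above by $g(b)$, and the conditional mean over the upper piece is bounded below by $g(b)$.

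Concretely, I would define $p \dff P(X \geq b \mid X \geq a)$, which lies in $[0,1]$ because $b \geq a$. Then by the law of total expectation,
\begin{equation*}
\bbe\!\left[g(X)\,\middle|\,X \geq a\right]
= (1-p)\,\bbe\!\left[g(X)\,\middle|\,a \leq X < b\right]
+ p\,\bbe\!\left[g(X)\,\middle|\,X \geq b\right].
\end{equation*}
Monotonicity of $g$ gives $\bbe[g(X)\mid a \leq X < b] \leq g(b) \leq \bbe[g(X)\mid X \geq b]$, so replacing the first conditional expectation by the second can only increase the right-hand side, yielding
\begin{equation*}
\bbe\!\left[g(X)\,\middle|\,X \geq a\right]
\leq (1-p)\,\bbe\!\left[g(X)\,\middle|\,X \geq b\right]
+ p\,\bbe\!\left[g(X)\,\middle|\,X \geq b\right]
= \bbe\!\left[g(X)\,\middle|\,X \geq b\right],
\end{equation*}
which is exactly the desired inequality.

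Two small points to watch. First, the case $a = b$ is trivial and the case $P(X \geq b) = 0$ makes the conditional expectation on the right ill-defined, but then the claim is vacuous or can be handled by taking limits (continuity of $X$ ensures the conditional densities behave nicely). Second, strictly speaking one should allow $g$ to be any nondecreasing measurable function for which the expectations exist; no additional hypothesis is needed beyond that. There is no real obstacle here — the whole argument is a one-line application of the tower property together with the monotonicity of $g$, and is a standard lemma that is being invoked in Appendix \ref{app:lower} to bound the throughput of the distributed scheme from below.
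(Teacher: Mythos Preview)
Your proof is correct and is essentially the same argument as the paper's. Both split the event $\{X\geq a\}$ at the threshold $b$ and exploit the two monotonicity facts $\bbe[g(X)\mid a\leq X<b]\leq g(b)$ and $\bbe[g(X)\mid X\geq b]\geq g(b)$; the only difference is that you package the computation via the tower property and the weight $p=P(X\geq b\mid X\geq a)$, whereas the paper writes the same steps out as explicit integral manipulations starting from the $\{X\geq b\}$ side and working down.
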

\begin{proof}
See Appendix \ref{app:lemma:condition}.
\end{proof}
By recalling the definition of $\sinr_m^{(i)}$ we have
\begin{align}
    \nonumber &\sum_{i\in\mathcal{H}_m}\bbe\bigg[\log\Big(1+\sinr_{m,i}\Big)\;\bigg|\;
    \mathcal{H}_m\bigg]\\
    \nonumber=& \sum_{i\in\mathcal{H}_m}\bbe\bigg[\log\Big(1+\sinr_{m,i}\Big)\;\bigg|\;
    \sinr_{m,i}\geq\lambda(m,i)\;\\
    \label{eq:Rm_lower0}&\hspace{.9 in};\forall m'\neq m:
    \frac{\sinr_{m,i}}{\lambda(m,i)}\geq \frac{\sinr_{m',i}}{\lambda(m',i)}\bigg]\\
    \label{eq:Rm_lower1} \geq& \sum_{i\in\mathcal{H}_m}\bbe\bigg[\log\Big(1+\sinr_{m,i}\Big)\;\bigg|\;
    \sinr_{m,i}\geq\lambda(m,i)\bigg]\\
    \label{eq:Rm_lower2} \geq &\sum_{i\in\mathcal{H}_m}\bbe\bigg[\log\Big(1+\sinr_{m,i}\Big)\;\bigg|\;
    \sinr_{m,i}\geq\min_{j}\lambda(m,j)\bigg]\\
    \nonumber=& \sum_{i\in\mathcal{H}_m}\bbe\bigg[\log\Big(1+\sinr_{m,i}\Big)\;\bigg|\;
    \sinr_{m,i}\geq\min_{j}\lambda(m,j)\\
    \label{eq:Rm_lower3}&\hspace{.8 in}; \forall l\notin\mathcal{H}_m: \sinr_{l,m}<\min_{j}\lambda(m,j)
    \bigg]\\
    \label{eq:Rm_lower4}=& \sum_{j=1}^{|\mathcal{H}_m|}\bbe\bigg[\log\Big(1+\sinr_m^{(j)}\Big)\;\bigg|\;
    \sinr_m^{(j)}\geq\min_{i}\lambda(m,i)\bigg]\\
    \label{eq:Rm_lower5} \geq &
    \sum_{j=1}^{|\mathcal{H}_m|}\bbe\bigg[\log\Big(1+\sinr_m^{(j)}\Big)\;\bigg],
\end{align}
where (\ref{eq:Rm_lower0}) is obtained by replacing $\mathcal{H}_m$ by
an equivalent representation. Transition from (\ref{eq:Rm_lower0}) to
(\ref{eq:Rm_lower1}) holds by applying Lemma~\ref{lemma:condition} for
$b=\sinr_{m',i}\cdot\frac{\lambda(m,i)}{\lambda(m,i')}$ and $a=0$ for
all $m'\neq m$. Transition to (\ref{eq:Rm_lower2}) is again justified
by using Lemma~\ref{lemma:condition}. Due to the statistical
independence of $\sinr_{m,i}$ and $\sinr_{m,l}$ for $m\in\mathcal{H}_m$
and $l\notin\mathcal{H}_m$ the additional constraints imposed in
(\ref{eq:Rm_lower3}) do not result in any changes. The conditions in
(\ref{eq:Rm_lower3}) are equivalent to having the $|\mathcal{H}_m|$
largest components of $\{\sinr_{m,n}\}_{n=1}^N$ be greater than
$\min_{1\leq i\leq N}\lambda(i)$, which is mathematically stated in
(\ref{eq:Rm_lower4}). Finally, (\ref{eq:Rm_lower5}) holds by applying
Lemma~\ref{lemma:condition} one more time.

\section{Proof of Lemma \ref{lemma:f}} \label{app:lemma:f}
By the expansion of $\Big(x+(1-x)\Big)^N$ we have
\begin{eqnarray*}
    f(x,i)&=& 1-\sum_{j=i+1}^N{N\choose j}x^{N-j}(1-x)^j\\
    &=& 1-\sum_{j=i+1}^N{N\choose N-j}x^{N-j}(1-x)^j\\
    &=& 1-\sum_{k=0}^{N-(i+1)}{N\choose k}(1-x)^{N-k}x^k\\
    &=&1-f(1-x,N-i-1),
\end{eqnarray*}
where it can be concluded that $f'(u,i)\big|_{u=x}=f'(u,N-i-1)\big|_{u=1-x}$. So it is
sufficient to show that $f'(x,i)\geq 0$ for $x\leq \frac{1}{2}$ and for
all $i=1,\dots, N-1$. For this purpose we consider two cases of
$i\leq\lfloor\frac{N}{2}\rfloor$ and
$i>\lfloor\frac{N}{2}\rfloor$. \\
{\underline{\bf Case 1: $i\leq\lfloor\frac{N}{2}\rfloor$}}
\begin{eqnarray}
  \nonumber f'(x,i) &=& \sum_{j=0}^{i}{N\choose j}(N-j)x^{N-j-1}(1-x)^j\\\
  \nonumber &-&{N\choose
  j}jx^{N-j}(1-x)^{j-1}\\
  \nonumber &=& \sum_{j=0}^{i}{N\choose
  j}x^{N-j-1}(1-x)^{j-1}\Big[N(1-x)-j\Big],
\end{eqnarray}
where since $0\leq j\leq i$ it can be shown that for $x\leq \frac{1}{2}$
\begin{align}\label{eq:lemma:f1}
    \nonumber N(1-x)-j&\geq N(1-x)-i\\
    &\geq
    \nonumber N(1-x)-\frac{N}{2}\\
    \nonumber&=\frac{N}{2}(1-2x)\\
    &\geq 0.
\end{align}
{\underline{\bf Case 2: $i>\lfloor\frac{N}{2}\rfloor$}}\\
Define $a_j=1-\frac{1}{2}\delta(\lfloor\frac{N}{2}\rfloor-j)$, where
$\delta(\cdot)$ is the Dirac delta function. Therefore, we get
\begin{align*}
   f(x,i)&=f(x,N-i-1)\\
   &+\sum_{j=N-i}^{\lfloor\frac{N}{2}\rfloor}a_j{N\choose
   j}\Big[x^{N-j}(1-x)^j+x^j(1-x)^{N-j}\Big].
\end{align*}
For $x\leq \frac{1}{2}$ we get
\begin{align}
   \nonumber f'(x,i) &= f'(x,N-i-1)\\
   \nonumber &+\sum_{j=N-i}^{\lfloor\frac{N}{2}\rfloor}a_j{N\choose j}\bigg\{
   x^{N-j-1}(1-x)^{j-1}\Big[N-j-Nx\Big]\\
   \nonumber &+\underset{\geq
   x^{N-j-1}(1-x)^{j-1}}{\underbrace{x^{j-1}(1-x)^{N-j-1}}}\Big[j-Nx\Big]\bigg\}\\
   \nonumber &\geq f'(x,\underset{\leq \lfloor\frac{N}{2}\rfloor}{\underbrace{N-i-1}})\\
   \nonumber & +
   \sum_{j=N-i}^{\lfloor\frac{N}{2}\rfloor}a_j{N\choose
   j} x^{N-j-1}(1-x)^{j-1}\underset{\geq 0}{\underbrace{\Big[N-2Nx\Big]}}\\
   \label{eq:lemma:f2}&\geq 0.
\end{align}
From~(\ref{eq:lemma:f1})~and~(\ref{eq:lemma:f2}) it is concluded
that for $x\leq \frac{1}{2}$, $f(x,i)$ is an increasing function of
$x$, which completes the proof.

\section{Proof of Lemma \ref{lemma:exp_scaling2}}
\label{app:lemma:exp_scaling2}

This proof follows the same spirit as the analysis provided in
\cite{Sanayei:IT07}. However, due to some differences in our setting,
we provided an independent treatment.

For any given number of users $N$, we define a random variable $X_N$,
distributed as $X_N\sim G^N(x)$ and also for $j=1,\dots, N$ we define
\begin{eqnarray*}
  \mu_{(j)} &\dff& \int_0^{\infty}x\;dG^{(j)}(x),\\
  \mbox{and}\;\;\;\sigma^2_{(j)} &\dff&
  \int_0^{\infty}\Big(x-\mu_{(i)}\Big)^2\;dG^{(j)}(x),\\
  \mbox{and}\;\;\;\mu_N &\dff&
  \bbe[X_N]=\int_0^{\infty}x\;dG^N(x)\\
  &=&\sum_{j=1}^NQ_j
  \int_0^{\infty}x\;dG^{(j)}(x)=\sum_{j=1}^NQ_j\mu_{(j)}.
\end{eqnarray*}
As given in~\cite[Sec. 4.6]{Arnold:Book} and discussed in details
in~\cite{Sanayei:IT07}, for ordered exponentially distributed random
variables $X_N$ we have
\begin{equation}
    \label{eq:sigma_n}
    \sigma^2_N<2+2\mu_{(1)}\Big(\mu_{(1)}-\mu_{N}\Big),
\end{equation}
\begin{equation}
    \label{eq:mu_1}
    \mbox{and}\;\;\;\log N+\zeta+\frac{1}{2(N+1)}\leq\mu_{(1)}\leq\log
    N+\zeta+\frac{1}{2N},
\end{equation}
and therefore,
\begin{equation*}
     \mu_{(1)}\doteq \log N,
\end{equation*}
where $\zeta\approx0.577$ is the Euler-Mascheroni constant. Also
\begin{equation*}
     \mu_{(1)}-\log\bigg(\sum_{j=1}^NjQ_j\bigg)-\zeta-0.5\leq\mu_N\leq\mu_{(1)}.
\end{equation*}
By taking into account the constraint in (\ref{eq:Q1}), as
$N\rightarrow\infty$
\begin{equation}
    \label{eq:mu_n}
    1-\frac{\log\bigg(\sum_{j=1}^NjQ_j\bigg)-\zeta-0.5}{\log N}\leq\mu_N\leq 1.
\end{equation}
Equations~(\ref{eq:mu_1})~and~(\ref{eq:mu_n}) together show that
\begin{equation}
    \label{eq:mu_n2}
    \mu_{(1)}\doteq\mu_N\doteq\log N,
\end{equation}
which also implies that $\mu_N\rightarrow\infty$. Taking into
account~(\ref{eq:sigma_n})~ and~(\ref{eq:mu_n2}) we also conclude that
$\lim_{N\rightarrow\infty}\frac{\sigma_N}{\mu_N}=0$ and therefore the
conditions of Theorem~\ref{th:limit} are met. Hence, from
Theorem~\ref{th:limit}
\begin{eqnarray*}
    \int_0^{\infty}\log(1+ax)\;dG^N(x)&=&\bbe\Big[\log(1+a X_N)\Big]\\
    &\doteq&
    \log\Big(1+a\bbe[X_N]\Big)\\
     &=&\log\Big(1+a\mu_N\Big)\\
     &\doteq&\log\log N+\log(a).
\end{eqnarray*}

\section{Proof of Lemma~\ref{lemma:condition}}
\label{app:lemma:condition}
\begin{align*}
   \bbe\Big[g(X) \med X\geq b \Big] &= \int_b^{\infty}g(x)f_{X|X\geq b}(x)\;dx \\
   &= \frac{1}{\pr(X\geq b)}\int_b^{\infty}g(x)f_X(x)\;dx   \\
   &\geq\bigg[1-\frac{\pr(X\geq b)}{\pr(X\geq a)}\bigg]g(b)\\
   &+\frac{1}{\pr(X\geq a)}\int_b^{\infty}g(x)f_X(x)dx \\
   &=\frac{g(b)}{\pr(X\geq a)}\;\pr(a\leq X\leq b) \\
   &+\frac{1}{\pr(X\geq
   a)}\int_b^{\infty}g(x)f_X(x)dx\\
   &\geq \frac{1}{\pr(X\geq a)}\int_a^bg(x)f_X(x)\;dx\\
   &+\frac{1}{\pr(X\geq
   a)}\int_b^{\infty}g(x)f_X(x)dx\\
   &=\int_a^{\infty}g(x)f_{X|X\geq a}(x)\;dx\\
   &=\bbe\Big[g(X) \med X\geq
   a
   \Big].
\end{align*}

\renewcommand\url{\begingroup\urlstyle{rm}\Url}

\bibliographystyle{IEEEtran}
\bibliography{IEEEabrv,CR_MUD}

\begin{thebibliography}{10}
\providecommand{\url}[1]{#1}
\csname url@rmstyle\endcsname
\providecommand{\newblock}{\relax}
\providecommand{\bibinfo}[2]{#2}
\providecommand\BIBentrySTDinterwordspacing{\spaceskip=0pt\relax}
\providecommand\BIBentryALTinterwordstretchfactor{4}
\providecommand\BIBentryALTinterwordspacing{\spaceskip=\fontdimen2\font plus
\BIBentryALTinterwordstretchfactor\fontdimen3\font minus
  \fontdimen4\font\relax}
\providecommand\BIBforeignlanguage[2]{{%
\expandafter\ifx\csname l@#1\endcsname\relax
\typeout{** WARNING: IEEEtran.bst: No hyphenation pattern has been}%
\typeout{** loaded for the language `#1'. Using the pattern for}%
\typeout{** the default language instead.}%
\else
\language=\csname l@#1\endcsname
\fi
#2}}

\bibitem{Knopp:ICC95}
R.~Knopp and P.~A. Humblet, ``Information capacity and power control in
  single-cell multiuser communications,'' in \emph{Proc. Int. Conf. Commun.},
  Seattle, WA, Jun. 1995.

\bibitem{Tse:ISIT97}
D.~N.~C. Tse, ``Optimal power allocation over parallel {G}aussian channels,''
  in \emph{Proc. IEEE Int. Symp. Inf. Theory}, Ulm, Germany, Jun. 1997.

\bibitem{Viswanath:IT02}
P.~Viswanath, D.~N.~C. Tse, and R.~Laroia, ``Opportunistic beamforming using
  dumb antennas,'' \emph{{IEEE} Trans. Inform. Theory}, vol.~48, no.~6, pp.
  1277--1294, Jun. 2002.

\bibitem{Sanayei:IT07}
S.~Sanayei and A.~Nosratinia, ``Opportunistic downlink transmission with
  limited feedback,'' \emph{{IEEE} Trans. Inform. Theory}, vol.~53, no.~11, pp.
  4363--4372, Nov. 2007.

\bibitem{Sharif:IT05}
M.~Sharif and B.~Hassibi, ``On the capacity of {MIMO} broadcast channel with
  partial side information,'' \emph{{IEEE} Trans. Inform. Theory}, vol.~51,
  no.~2, pp. 506--522, Feb. 2005.

\bibitem{Sharif:COM07}
------, ``A comparison of time-sharing, {DPC}, and beamforming for {MIMO}
  broadcast channels with many users,'' \emph{{IEEE} Trans. Commun.}, vol.~55,
  no.~1, pp. 11--15, Jan. 2007.

\bibitem{FCC}
``Secondary markets initiative,'' Federal Communications Commission, available
  at \url{http://wireless.fcc.gov/licensing/secondarymarkets/}, Tech. Rep.

\bibitem{Mitola:thesis}
J.~Mitola, ``Cognitive radio: An integrated agent architecture for software
  defined radio,'' Ph.D. dissertation, KTH, Stockholm, Sweden, December 2000.

\bibitem{Goldsmith:IEEE09}
A.~Goldsmith, S.~A. Jafar, I.~Maric, and S.~Srinivasa, ``Breaking spectrum
  gridlock with cognitive radios: An information theoretic perspective,''
  \emph{Proceedings of the IEEE}, vol.~97, no.~5, pp. 894--914, May 2009.

\bibitem{Johnson}
N.~L. Johnson, S.~Kotz, and A.~W. Kemp, \emph{Distributions in Statistics,
  Discrete Univariate Distributions}, 2nd~ed.\hskip 1em plus 0.5em minus
  0.4em\relax Wiley, 1993, new York, NY.

\bibitem{Sanayei:WCOM07}
S.~Sanayei and A.~Nosratinia, ``Opportunistic beamforming with limited
  feedback,'' \emph{{IEEE} Trans. Wireless Commun.}, vol.~6, no.~8, pp.
  2765--2771, Aug. 2007.

\end{thebibliography}

\end{document}